\newtheorem{definition}{Definition}
\newtheorem{proposition}{Proposition}
\newtheorem{remark}{Remark}
\begin{document}
\title{Quantum advantage in a unified scenario and secure detection of resources}

\author{Saronath Halder}
\affiliation{Centre for Quantum Optical Technologies, Centre of New Technologies, University of Warsaw, Banacha 2c, 02-097 Warsaw, Poland}

\author{Alexander Streltsov}
\affiliation{Institute of Fundamental Technological Research, Polish Academy of Sciences, Pawi\'{n}skiego 5B, 02-106 Warsaw, Poland}

\begin{abstract}
Quantum resources may provide advantage over their classical counterparts in communication tasks. We say this as {\it quantum advantage}. Here we consider a single communication task to study different approaches of observing quantum advantage. We say this setting as a {\it unified scenario}. Our task is described as the following. There are three parties - the Manager, Alice, and Bob. The Manager sends a value of a random variable to Alice and at the same time Bob receives some partial information regarding that value. Initially, neither Alice nor Bob knows the input of the other, which is received from the Manager. The goal of the task is achieved if and only if the value of the random variable, sent to Alice by the Manager, is identified by Bob with success probability greater than half all the time. Here non-zero error probability is allowed. However, to help Bob, Alice sends a limited amount of classical or quantum information to him (cbit or qubit). We show that the goal of the task can be achieved when Alice sends a qubit. On the other hand, a cbit communication is not sufficient for achieving the goal. Thus, it establishes quantum advantage. We further show that the optimal success probability in the overall process for a qubit communication can be higher than the optimal success probability for a cbit communication. In fact, this success probability is higher for qubit communication even if along with cbit communication Alice and Bob share randomness which allows mixing of classical strategies. Clearly, it not only establishes quantum advantage, it also demonstrates a more prominent non-classical feature. Actually, we obtain higher success probability compared to other tasks. This also suggests the experiment friendly nature of our task. For applications, we connect our task with semi-device-independence and show how our task can be useful to detect quantumness of a communication in a secure way. Moreover, within the same setting, we provide a way to detect universal coherence present in an ensemble. For a high dimensional random variable, to achieve the goal, it may require a high dimensional classical communication while it can be achieved with only a qubit communication. This establishes a large separation between the quantum and the classical communication. 
\end{abstract}
\maketitle

\section{Introduction}
The setting of communication tasks plays a crucial role to study the difference between classical and quantum communication. In a typical communication task, there are two players Alice and Bob. They receive inputs from a mediator, say, the Manager. Bob does not know about Alice's input. Likewise, Alice does not know about Bob's input. Now, Bob has to achieve a goal which is related to Alice's input. To help Bob, Alice is allowed to send some amount of information to him. What Alice does is that she tries to encode some information regarding her input into the information which she is allowed to send to Bob. Ultimately, Bob tries to achieve the goal based on the input that he receives from the Manager and also based on the communication that he receives from Alice's end. Given this setting of a communication task, there are multiple ways to study the aforesaid difference, i.e., the difference between classical and quantum communication. 

The first approach that we want to talk about is described as the following. Suppose, under the above setting, Alice is only allowed to send a `limited amount of information' which can be a quantum bit (qubit) or a classical bit (cbit). Then, it can be asked what are the tasks where the goal cannot be achieved when there is only a cbit communication but it can be achieved when there is a qubit communication. Clearly, existence of such tasks demonstrates advantage of quantum communication. In brief, we say it quantum advantage. Furthermore, it can be analyzed if the optimal success probability to achieve the goal in the quantum case is higher than that in the classical case. This is particularly important as there is a fundamental bound which tells us that from a qubit only a cbit of information is extractable with certainty \cite{Holevo73} (see also \cite{Frenkel15} in this regard). This is due to the fact that a qubit state can be prepared in infinitely many ways but only two states can be distinguished perfectly. In spite of the existence of such a limitation, there are tasks where it is possible to achieve better success probability in acquiring the goal when there is a qubit communication. For example, one can consider the task of a random access code (RAC) and its quantum version \cite{Wiesner83, Ambainis99, Ambainis02, Hayashi06-1, Tavakoli15, Vaisakh21}. For other relevant works see Refs.~\cite{Mukherjee21, Das21, Kerppo22, Kanjilal23, Chakraborty23, Saha23, Abhyoudai23, Heinosaari24-1, Patra24, Scala24, Ding24, Heinosaari24} and the references therein. We now describe one of the main motivations to consider this approach. Suppose, it is not possible to achieve the goal of a given task with certainty but it is possible to get quantum advantage in the task. Then, the question of interest is that up to what extent one can push the success probability to achieve the goal of the task. Can it be very close to one? Obviously, to answer this question, one has to consider the communication tasks where it is not only possible to obtain quantum advantage but it is also possible to achieve the goals with better success probabilities than that of the tasks associated with quantum random access codes.  

From the above discussion, it is quite understandable why the advantage of a qubit communication is surprising. Raising this surprise, we now want to talk about the tasks where there is a significant advantage of quantum communication over its classical counterpart. Here comes the second approach which deals with a question that belongs to the theory of communication complexity \cite{Yao79}. This question can be described as the following. What is the minimum amount of classical or quantum information, necessary for Alice to send it to Bob such that he can achieve the goal of a given task? So, here we can consider those tasks where the goal can be achieved if Alice is allowed to send a qubit to Bob. On the other hand, in such a task, to achieve the goal, it might be necessary that Alice has to send a large number of cbits. In this context, one can have a look into Refs.~\cite{Massar01, Buhrman01, Wolf03, Galvao03, Perry15, Halder24, Rout23, Heinosaari24-1}. Here our interest lies in unifying this approach with the previous one through the construction of a single communication task. The main motivations behind this unification are: (a) if we consider a single communication task where several approaches can be studied then, it might be easier to understand different features of `quantum advantage', (b) using limited resources, one can experimentally demonstrate different directions of quantum advantage. The second point, i.e., (b) is particularly important because of the hardness to build quantum technologies \cite{Bruzewicz19, Pelucchi22, Cerezo22}. Furthermore, such a single communication task might be useful from several other aspects. In the following, we describe our communication task and corresponding aspects gradually.

As mentioned, in this work we consider aforesaid approaches in a single task. We describe this as a {\it unified scenario}. This can be better understood through a game (see Fig.~\ref{fig1}) where there are three parties: the Manager, Alice, and Bob. The Manager sends a value of a random variable to Alice and at the same time Bob receives partial information regarding that value. Bob does not know which value is sent to Alice, likewise, Alice also does not know what information is sent to Bob. The goal of the task can be achieved if and only if the value of the random variable can be identified by Bob with success probability greater than $1/2$ no matter what information he receives from the Manager or which value of the random variable is sent to Alice by the Manager. We mention that when Bob tries to identify the value of the variable, non-zero probability of error is allowed. Clearly, this setting is different from a conclusive setting \cite{Halder24} where there is no scope to commit error. However, to help Bob, Alice sends a limited amount of classical or quantum information to Bob. We show that when Alice sends a qubit the task can be accomplished. On the other hand, a cbit communication is not sufficient for accomplishing the task. We further show that the optimal success probability in case of a qubit communication can be higher than that in case of a cbit communication. In fact, this success probability is higher for qubit communication even if along with cbit communication Alice and Bob share randomness which allows mixing of classical strategies. Thus, we argue that the non-classical nature is more prominent in the present task. Actually, we obtain higher success probability in the quantum case compared to quantum random access code (QRAC). Therefore, our task is more experiment friendly. Interestingly, in our task, for a high dimensional random variable, it may require to communicate a high dimensional classical information to achieve the goal of the task while it can be achieved via only a qubit communication. This clearly establishes a large separation between the quantum and the classical communication. However, we mention that there are papers where arbitrary quantum advantage is reported \cite{Galvao03, Perry15, Rout23}. But in our case we consider a single setting to observe several incidents. This is certainly one of the key points of our work. 

\begin{figure}[t!]
\centering
\includegraphics[scale=0.28]{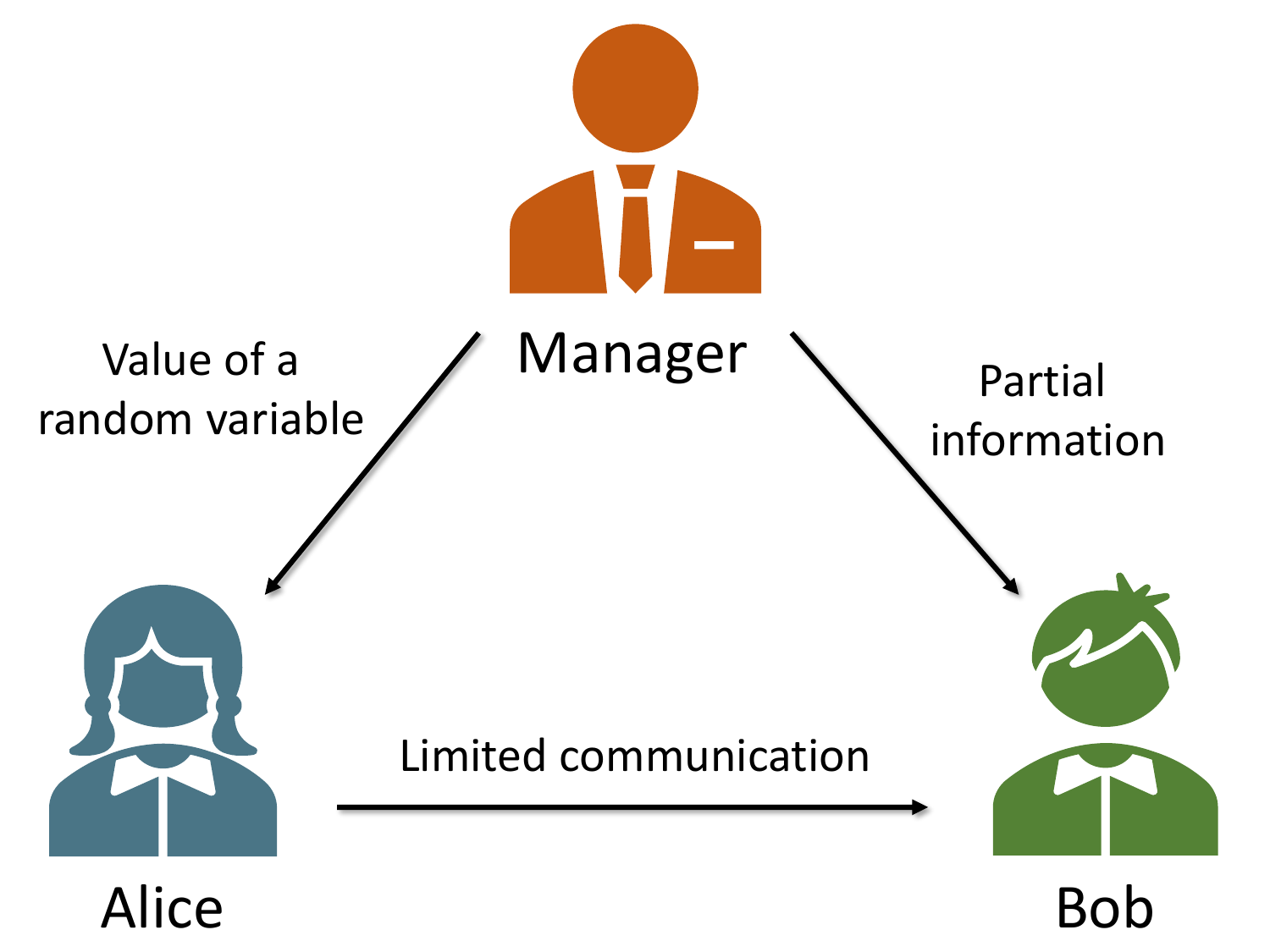}
\caption{Schematic diagram for our communication task.}\label{fig1}
\end{figure}

Now, to build quantum technologies, it is very important to understand the difference between classical and quantum information. In particular, for the advancement of such technologies, it is necessary to understand the advantage that one can obtain using quantum resources over their classical counterparts. This advantage, in communication tasks, mainly comes from the quantumness of communication. It may have connection with another resource like quantum coherence \cite{Streltsov17}. Certification of security \cite{Scarani09, Brunner14, Herrero17, Xu20, Supic20, Uola20, Portmann22} is also an important aspect of quantum technologies. In many cases, researchers talk about device-independent (DI) strategies as they deal with the secure ways of accomplishing tasks. In our case, to exhibit quantum advantage, quantum coherence plays a very important role. Apart from examining this role, we ask if our communication task can lead to any application which will improve our understanding regarding quantum resources. This can be regarded as another motivation of our work. To execute this purpose, we first connect our communication task with semi-device-independence. Then, we explore how our task can be employed to detect quantumness of communication. We further provide a sufficient condition to detect universal coherence of an ensemble using our communication task and describe corresponding methodology.

In the following we first present our main findings step by step and then we draw the conclusion.

\section{Our task}
The Manager sends a value of a random variable $X$ to Alice. This value is not known to Bob. At the same time, some partial information regarding this value is sent to Bob. Again, this partial information is not known to Alice. The goal of the task is: Bob has to identify the value of $X$, which is sent to Alice. In fact, this should be done with a probability $p>1/2$ all the time, i.e., does not matter which value is sent to Alice or what information is received by Bob. However, to help Bob, a `limited amount of information' is sent to him by Alice. In the following we first consider that $\dim X$ = $d$ (say) = 3 and describe the task more precisely. We note here that to achieve a probability $p = 1/2$, it is not required to send any information to Bob as this success probability can be achieved through random guess. Therefore, it is important to fix a probability $p>1/2$. We mention that when Bob tries to identify the value of $X$, which is sent to Alice by the Manager, non-zero error probability is allowed. When $d=3$, apparently, the success probability of identifying the value of $X$ via random guess should be 1/3. However, here Bob receives additional information from the Manager and this is why the sucess probability through random guess can be 1/2 here.

\subsection{Three dimensional random variable}
The Manager sends a value of $X$ to Alice randomly. It is denoted by $x_i\in\{x_1, x_2, x_3\}$. This set is known to both Alice and Bob but the particular $x_i$ which is sent to Alice by the Manager, is not known to Bob. At the same time, the Manager sends the information of a random set to Bob. The cardinality of the set is two here and this set must contain the value of $X$, which is sent to Alice, along with some other value. These information are known to both Alice and Bob but the particular information which is sent to Bob by the Manager, is not known to Alice. So, we say that the Manager sends Bob `$j$' where this `$j$' is associated with $S_j$. Based on the dimension of $X$ and the cardinality of $S_j$, it is possible to define three sets $S_1 = \{x_1, x_2\}$, $S_2 = \{x_2, x_3\}$, and $S_3 = \{x_3, x_1\}$. These definitions are also known to both Alice and Bob. Since, the value of $X$, which is sent to Alice by the Manager, is randomly chosen, the information of which set is sent to Bob is also random. Now, to help Bob in identifying $x_i$ with a success probability $p>1/2$, Alice is allowed to send either a cbit or a qubit. The goal of the task can be achieved, i.e., the game can be won if and only if Bob is able to identify $x_i$ with $p>1/2$, does not matter what is the value of `$i$' or which `$j$' value Bob receives. Let us take one example for more clarity. Suppose, the Manager sends $x_1$ to Alice. Then, at the same time Bob receives either `1' or `3' randomly. We now present the first proposition of the paper.

\begin{proposition}\label{prop1}
The communication task which is described in the above does not have any classical counterpart, i.e., the game cannot be won if Alice is allowed to send only a cbit.
\end{proposition}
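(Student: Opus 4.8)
The plan is to exploit the mismatch between the two values of a classical bit and the threefold structure of the admissible sets. Note first that $S_1,S_2,S_3$ are exactly the three two-element subsets of $\{x_1,x_2,x_3\}$, i.e. the edges of a triangle on the three values, so \emph{every} pair of values is an admissible set that the Manager may send to Bob. I would start from an arbitrary deterministic encoding used by Alice, i.e. a map $c:\{x_1,x_2,x_3\}\to\{0,1\}$ assigning a cbit to each input. Since the domain has three elements and the codomain only two, the pigeonhole principle forces two distinct inputs, say $x_a$ and $x_b$, to collide, $c(x_a)=c(x_b)$. Because $\{x_a,x_b\}=S_j$ for some $j$, the Manager is allowed to send this particular $j$ to Bob while Alice holds one of $x_a,x_b$.

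Next I would analyze Bob conditioned on that $j$. When $c(x_a)=c(x_b)$, the full transcript Bob sees (the set $S_j$ together with Alice's cbit) is \emph{identical} whether Alice was given $x_a$ or $x_b$. Hence Bob's guess, which lies in $\{x_a,x_b\}$, has the same distribution in both cases: if he outputs $x_a$ with probability $r$ then $P(\text{correct}\mid x_a,j)=r$ and $P(\text{correct}\mid x_b,j)=1-r$, so the two conditional success probabilities sum to $1$ and at least one is $\le 1/2$. This holds for an arbitrary (even randomized) decoding, so the strict requirement $p>1/2$ fails for at least one admissible pair $(i,j)$, and every protocol with a \emph{deterministic} encoding is ruled out. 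The clean way to package this is as a graph-coloring statement: reading $c$ as a two-coloring of the vertices $\{x_1,x_2,x_3\}$, a simultaneous win on all three edges $S_1,S_2,S_3$ would require adjacent vertices to always receive opposite colors, i.e. a proper $2$-coloring of the triangle; since the triangle is an odd cycle this is impossible, and some edge is always monochromatic.

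The hard part will be closing the gap to genuinely \emph{randomized} encodings, where Alice may privately mix several maps $c$, so that the transcripts for $x_a$ and $x_b$ need only differ stochastically rather than coincide. Here the simple sum-to-one argument no longer applies, and the odd-cycle obstruction must be recovered at the level of probabilities rather than colors. The route I would attempt is to write the six conditional success probabilities as bilinear functions of Alice's encoding probabilities $a_i=P(\text{send }1\mid x_i)$ and Bob's per-edge response probabilities, and then to seek a single nonnegative weighting of the six $(i,j)$ pairs for which the weighted score of every classical strategy is capped at half the total weight, thereby separating the classical value from the all-pairs region $\{P(i,j)>1/2\}$ by one hyperplane. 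I expect this linear/convexity step, and in particular ruling out that shared or private randomness lifts all six success probabilities strictly above $1/2$ at once, to be the genuine technical obstacle, with the parity of the cycle $S_1\to S_2\to S_3\to S_1$ providing the inequality that must ultimately do the work.
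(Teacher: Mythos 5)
Your deterministic analysis is precisely the paper's own proof: the paper likewise takes Alice's encoding to be a function from $\{x_1,x_2,x_3\}$ to $\{0,1\}$, applies pigeonhole to find two values receiving the same bit, and notes that this pair is one of the sets $S_j$; your preliminary observation that Bob's side information cannot inform Alice's encoding is also the paper's opening remark. Your sum-to-one treatment of randomized decoders and the odd-cycle packaging are mild strengthenings/reformulations of that same argument.

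The gap you flag at the end, however, is not a technical obstacle that more work can overcome: the hyperplane separation you are seeking does not exist, because once private randomness is allowed on both sides the game \emph{can} be won with one cbit. Let Alice send $0$ on $x_1$, send $1$ on $x_3$, and send a fair coin flip on $x_2$. Bob decodes as follows: for $j=3$ (set $\{x_3,x_1\}$) he reads the bit directly; for $j=1$ (set $\{x_1,x_2\}$) he outputs $x_2$ upon receiving $1$, and upon receiving $0$ he outputs $x_1$ with probability $3/4$ and $x_2$ with probability $1/4$; symmetrically for $j=2$. The six conditional success probabilities are then $3/4$, $\tfrac{1}{2}\cdot 1+\tfrac{1}{2}\cdot\tfrac{1}{4}=5/8$, $5/8$, $3/4$, $1$, $1$, all strictly greater than $1/2$. (Note that neither ingredient suffices alone: your argument kills randomized decoding of a deterministic encoding, and one can check that a randomized encoding read by a deterministic decoder also fails; it is the combination that wins.) Consequently no nonnegative weighting of the six pairs can cap every classical strategy at half the total weight. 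The resolution of the apparent paradox is that the \emph{average} success probability is a linear functional, hence maximized by deterministic strategies and still bounded by $5/6$ — which is why the paper's semi-device-independent inequality survives randomization — whereas the winning condition ``all six probabilities exceed $1/2$'' is an intersection of open half-spaces, not a linear functional, and a mixture can lie inside it even though no deterministic strategy does. So your LP/convexity plan cannot succeed; both your argument and the paper's establish the proposition only under the implicit restriction that Alice's encoding is deterministic, and the statement is false without that restriction.
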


\begin{proof}
Before we provide the main argument of the proof it is important to note the following. Ultimately, Bob distinguishes between two values of the random variable. Apparently, information of two values of $X$ can be sent through a cbit. But Alice does not know which two values of $X$, Bob distinguishes between. In fact, a value of $X$ is sent to Alice randomly and thus, the information regarding a set is also randomly sent to Bob. Clearly, all values of $X$ are probable when Alice thinks of Bob distinguishing the values of the random variable $X$. So, the partial information which is sent to Bob is useful to him for decoding the information but it is not useful for Alice to encode the information.

The rest of the proof is based on the following. Alice must encode some property associated with the values of $X$ into a function. This function must output either `0' or `1' because Alice is only allowed to send a cbit to Bob. Since, $\dim X$ = 3, if the function outputs 0 or 1, then there are at least two values for which the function outputs the same. This implies that there exists at least one `$j$' and thereby an $S_j$, for which the function outputs the same value corresponding to the values of $X$ contained in $S_j$. Thus, these values cannot be distinguished with $p>1/2$. These complete the proof.
\end{proof}

{\it Optimal success probability in the classical case}: From the above, it is clear that there exists one `$j$' value, for which it is not possible to distinguish the values with $p>1/2$. So, in this case maximal success probability is $1/2$. For other two `$j$' values the success probabilities can be 1. Such a situation can be achieved through the following encoding: Alice sends `0' to Bob if she receives either $x_1$ or $x_2$, otherwise, she sends `1' to Bob. These imply that in the classical case maximum success probability is the average of three success probabilities corresponding to three `$j$' values and it is given by- $(1/3)\times 1 + (1/3)\times 1 + (1/3)\times(1/2)$ = 5/6 = 0.83 approx.

Next, we are interested to show that if Alice is allowed to send a qubit to Bob, then, not only the game can be won but they can also achieve a better success probability compared to the classical case. 

\begin{proposition}\label{prop2}
If Alice is allowed to send a qubit to Bob, then the goal of the task can be achieved, i.e., the game can be won.
\end{proposition}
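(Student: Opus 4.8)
The plan is to win the game by giving an explicit qubit encoding on Alice's side together with a measurement on Bob's side that adapts to the set $S_j$ he receives. The key observation is that, since $S_j$ already tells Bob which two values of $X$ are in play, his remaining job is a binary state-discrimination problem: decide which of the two candidate states Alice prepared. I would therefore let Alice map $x_1, x_2, x_3$ to the three symmetric ``trine'' states whose Bloch vectors lie $120^\circ$ apart on a great circle, e.g.
\begin{equation}
\ket{\psi_k} = \cos\frac{\theta_k}{2}\ket{0} + \sin\frac{\theta_k}{2}\ket{1}, \qquad \theta_k = \frac{2\pi(k-1)}{3},
\end{equation}
for $k = 1, 2, 3$. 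The decisive feature of this family is that all pairwise overlaps coincide, $|\braket{\psi_i|\psi_k}| = 1/2$ for $i \neq k$, so whichever pair Bob is asked to separate is equally hard.

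First I would fix this encoding and let Bob, upon learning $S_j = \{x_i, x_k\}$, apply the Helstrom-optimal projective measurement for the corresponding pair $\{\ket{\psi_i}, \ket{\psi_k}\}$; this is legitimate because his measurement may depend on $j$. Second, I would invoke the two-state discrimination bound: for two equally likely pure states with overlap $c$, the optimal success probability is $\tfrac12(1 + \sqrt{1 - c^2})$, attained by the measurement symmetric about the two states. Plugging in $c = 1/2$ yields
\begin{equation}
p = \frac12\left(1 + \frac{\sqrt3}{2}\right) = \frac12 + \frac{\sqrt3}{4} \approx 0.933.
\end{equation}
Because the optimal measurement is symmetric about the pair, the two conditional success probabilities (one per candidate value) are equal, so $p \approx 0.933$ is achieved for each individual value of $i$ and not merely on average. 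Since the same value arises for every $j$ as well, the winning condition $p > 1/2$ is satisfied uniformly, which is exactly what the proposition requires.

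I would close by noting that $0.933$ comfortably exceeds the classical optimum $5/6 \approx 0.833$ derived above, so a qubit does not just win the game but strictly beats the best cbit strategy, which is the stronger statement the paper is after. The one point requiring care --- and the place where a careless choice fails --- is the simultaneity constraint: Alice must commit to a single triple of states before she knows which discrimination Bob will face, so one fixed encoding has to make all three binary tasks succeed at once. This is precisely why the symmetric trine is the right choice: its pairwise-equal overlaps guarantee a common, $j$-independent success probability, whereas three nearly parallel states would render one pair trivial while collapsing another below $1/2$. Once the encoding is fixed, verifying the overlap value and the Helstrom probability is routine.
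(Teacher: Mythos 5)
Your proposal is correct and follows essentially the same route as the paper: encode the three values into the symmetric trine ensemble (your states are the paper's encoding up to a rotation, with identical pairwise overlaps of $1/2$) and let Bob perform the Helstrom-optimal binary discrimination for the pair indicated by $j$, yielding success probability $\tfrac12\left(1+\tfrac{\sqrt{3}}{2}\right)\approx 0.933 > 1/2$ for every $i,j$. Your added observation that the symmetric Helstrom measurement equalizes the two conditional success probabilities (so the bound holds per value, not merely on average) is a detail the paper leaves implicit, but the argument is the same.
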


\begin{proof}
To prove the proposition, it is sufficient to provide an encoding and a decoding strategy via which the game can be won. Alice's encoding strategy is given as the following:

\begin{equation}\label{eq4}
x_1 \rightarrow \ket{0},~ x_2 \rightarrow \frac{1}{2}(\ket{0}-\sqrt{3}\ket{1}),~ x_3 \rightarrow \frac{1}{2}(\ket{0}+\sqrt{3}\ket{1}).
\end{equation}
These states together form the so-called {\it trine ensemble} (for details see Ref.~\cite{Barnett09} and references therein). For this encoding, Bob has to distinguish between two nonorthogonal states for all values of `$j$'. Corresponding optimal probabilities can be found through the Helstrom bound \cite{Helstrom76}. It is easy to see that for every `$j$' value the probability of success is greater than 1/2. Thus, the game can be won.
\end{proof}

We now want to talk about the maximum success probability which can be achieved through the encoding given in (\ref{eq4}). Here, for all `$j$' values, Bob distinguishes between two nonorthogonal states. Corresponding success probabilities can be calculated form the well known Helstrom bound \cite{Helstrom76}. This is given by-
\begin{equation}
p = \frac{1}{2}\left(1+\sqrt{1-|\langle\phi|\phi^\prime\rangle|^2}\right).
\end{equation}
where `$p$' is the probability of success and $\{\ket{\phi}, \ket{\phi^\prime}\}$ are the states (equally probable) to be distinguished. For {\it trine ensemble}, it is known that $|\langle\phi|\phi^\prime\rangle|$ = $1/2$ for any pair of states from the ensemble. So, for the encoding given in (\ref{eq4}), the average success probability becomes $(1/3)(p+p+p)$ = $p$ = 0.93 approx. Clearly, this is better than the optimal probability of success in the classical case. We now have a remark regarding our communication task.

\begin{remark}
We establish advantage of quantum communication in the context of the present task. In fact, we exhibit a more prominent non-classical nature of this task. 
\end{remark}

From Proposition \ref{prop1}, we find that a `limited amount of classical information' is not sufficient for achieving the goal of the task. On the other hand, the quantum version of this `limited amount of information' is sufficient for achieving the goal of the present task. This is given in Proposition \ref{prop2}. Thus, we establish advantage of quantum communication. We mention whether it is possible to extract any advantage of quantum communication, depends on the goal we set. From Proposition \ref{prop2}, it is only evident that the winning condition can be achieved through qubit communication. This cannot guarantee that the average success probability in the overall process is larger in the quantum case compared to that of the classical case. In this regard, one should have a look into Ref.~\cite{Halder24}. Nevertheless, here we see that the average success probability in the overall process in the quantum case can be higher than the optimal success probability in the classical case. 

Note that for both propositions (\ref{prop1} and \ref{prop2}), Alice and Bob consider only pure strategies for encoding-decoding. We then assume that along with classical communication there is shared randomness between Alice and Bob. We also assume that this randomness allows the mixing of pure classical strategies. But via this mixing of classical strategies it is not possible to have a better success probability than 5/6. The reason can be described as the following. The success probability 5/6 is basically the upper bound of success probability in the classical case. So, corresponding strategy is the best possible pure strategy. Clearly, if one takes convex combination of such strategies then the success probability cannot be higher than 5/6. In this sense, here quantum communication is more advantageous than classical communication along with shared randomness. This is obviously a stronger demonstration of quantum advantage. 

Therefore, we argue that in the present communication game, we observe a more prominent non-classical nature of the task. However, it is possible to prove that the optimality of the success probability in the quantum case can be achieved through the encoding of (\ref{eq4}). 

{\it Optimality in the quantum case}: In general, we consider an encoding of the following form: $x_i\rightarrow\ket{\phi_i}$, $\forall i =1,2,3$. From the set $\{\ket{\phi_i}\}_i$, if we consider any two states then they must be linearly independent, otherwise, corresponding to a `$j$' value, the success probability becomes $1/2$. This is not desired. So, if we take $\ket{\phi_1} = \ket{\phi}$ then $\ket{\phi_2} = a\ket{\phi}+b\ket{\phi^\perp}$ and $\ket{\phi_3} = c\ket{\phi}+d\ket{\phi^\perp}$, where $\langle\phi|\phi^\perp\rangle = 0$, $|a|^2+|b^2|$ = $|c|^2+|d|^2$ = 1, $a/b\neq c/d$, $|b|, |d|>0$. $\ket{\phi_3}$ can also be written as $(ca^\ast+db^\ast)\ket{\phi_2}+(cb-da)\ket{\phi_2^\perp}$, where $\ket{\phi_2^\perp} = b^\ast\ket{\phi}-a^\ast\ket{\phi^\perp}$, $|ca^\ast+db^\ast|^2+|cb-da|^2=1$. We should remember that $a,b,c,d$ are complex numbers and the symbol `$\ast$' suggests complex conjugation of some complex number. We now define $p_e^j$ $\forall j=1,2,3$. These are the probabilities of error corresponding to each `$j$' value. They are given by-
\begin{equation}
\begin{array}{l}
p_e^1=\frac{1}{2}\left(1-\sqrt{1-|\langle\phi_1|\phi_2\rangle|^2}\right)=\frac{1}{2}\left(1-|b|\right),\\[1 ex]

p_e^2=\frac{1}{2}\left(1-\sqrt{1-|\langle\phi_2|\phi_3\rangle|^2}\right)=\frac{1}{2}\left(1-|cb-da|\right),\\[1 ex]

p_e^3=\frac{1}{2}\left(1-\sqrt{1-|\langle\phi_3|\phi_1\rangle|^2}\right)=\frac{1}{2}\left(1-|d|\right).
\end{array}
\end{equation}
So, the average probability of error $p_e$ = $(1/3)(p_e^1+p_e^2+p_e^3)$. This is given by-
\begin{equation}
p_e = \frac{1}{2}-\frac{1}{6}\left(|b|+|cb-da|+|d|\right)
\end{equation}
We next aim at the minimization of $p_e$ which can be done by maximizing the quantity $(|b|+|cb-da|+|d|)$. The maximum value of this quantity can be found by applying {\it triangle inequality} in the following way:
\begin{equation}
\begin{array}{l}
|b|+|cb-da|+|d| = |b|+|cb+(-d)a|+|d|\\[1 ex]

\leq |b|+|c||b|+|-d||a|+|d|\\[1 ex]

= (1+|c|)|b|+(1+|a|)|d|.
\end{array}
\end{equation}
In this way, we arrive to a maximization problem. However, we can consider the maximization problem for general states $\ket{\phi_2}$ and $\ket{\phi_3}$, i.e., $0\leq |a|,|b|,|c|,|d|\leq 1$ and let us see what is the solution.
\begin{equation*}
\begin{array}{l}
\mbox{maximize}~~(1+|c|)|b|+(1+|a|)|d|\\[1 ex]

\mbox{subject to}~~0\leq |a|,|b|,|c|,|d|\leq 1,~ |a|^2+|b|^2 = |c|^2+|d|^2 = 1.
\end{array}
\end{equation*}
This problem can be simplified in the following way:
\begin{equation*}
\begin{array}{l}
\mbox{maximize}~~(1+|a|)\sqrt{1-|c|^2}+(1+|c|)\sqrt{1-|a|^2}\\[1 ex]

\mbox{subject to}~~0\leq |a|,|c|\leq 1.
\end{array}
\end{equation*}
So, basically we have to look for two positive numbers $|a|$ and $|c|$ between the interval $[0,1]$ such that the above mentioned quantity is maximized. But even if we get a mathematical solution, the question is its achievability in a practical situation. It is easy to check that when $|a|=|c|=1/2$ the quantity $(1+|a|)\sqrt{1-|c|^2}+(1+|c|)\sqrt{1-|a|^2}$ is maximized. Notice that this solution is achievable in a practical situation via the encoding of (\ref{eq4}). So, for this encoding $p_e$ becomes minimum and thereby, maximization of average success probability in the overall process occurs. This maximal value is given by- (1/2+$\sqrt{3}$/4) = 0.93 approx.

\subsection{Application of our communication task}
Here we discuss about a possible application of our communication task. In particular, we want to discuss how the present task can be employed to detect {\it quantumness} in a semi-DI way (for semi-device-independence, one can have a look into Ref.~\cite{Pawlowski11}). 

In the usual device-independent (DI) setup, one (say, the Manager) who wish to detect some property of the physical system, is not required to understand the underlying theory of the device which is in use. For a given system, the Manager is just required to get rid of the inputs and the outputs. From these data, if the Manager can conclude something, which is required to achieve the goal, without committing any error, then we say that the task is completed in a DI way. For example, for a given quantum state if a person collect information of inputs and outputs of measurements performed in the so-called Bell scenario \cite{Brunner14}, then that person can construct the conditional probabilities. Employing these probabilities if it is possible to show violation of any Bell's inequality, then we can say that detection of entanglement within the given state is done in a DI manner. In this case, the person knows nothing about the mechanisms of the devices which are in use (see \cite{Brunner14, Supic20} and the references therein). That person is only given a state which is either separable or entangled. We mention that DI methods are based on nonlocality which strongly suggests its connection with entanglement based protocols. Nevertheless, our protocol does not include entanglement and thus, it is natural to consider its connection with semi-DI method.

In our case, we consider black boxes for Alice and Bob. Alice's black box is a preparation black box and Bob's one is a measurement black box. The classical or quantum information which is sent by Alice to Bob is a two-level information. So, the dimension of the communicating information is given. Suppose, the Manager wish to learn if Alice sends a classical bit or quantum bit to Bob. Here the Manager does not have to understand fully the underlying theory of the communication task. A semi black-box scenario can be considered where the inputs and the outputs are only considered. For example, Alice's input is `$x_i$' and Bob's input is `$j$'. Here we assume that the `$j$' values are determined by a classical variable. This is because for a given $x_i$, all `$j$' values are not allowed. Alice feeds a cbit or a qubit to Bob. The Manager wishes to detect the classical or non-classical property of this communication. For this purpose, all that the Manager knows is - Alice either sends a qubit or a cbit. It is also assumed how the `$j$' values are defined. Preparations of Alice and measurements of Bob are not characterized. See Fig.~\ref{fig2} for this application.

\begin{figure}
\centering
\includegraphics[scale=0.25]{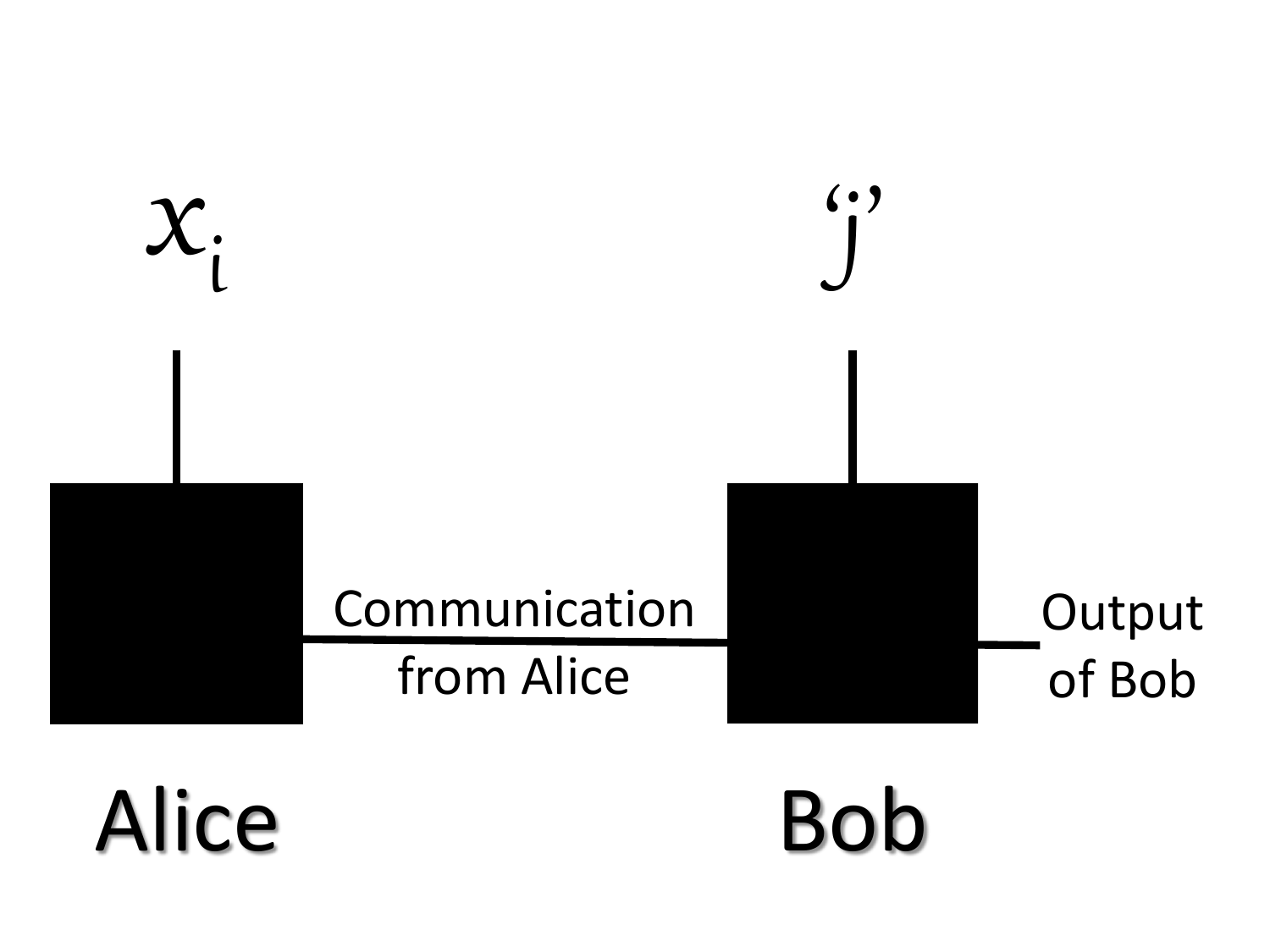}
\caption{Schematic diagram to detect non-classicality of Alice's communication to Bob in a semi-device-independent way.}\label{fig2}
\end{figure}

After completion of each round, Bob announces his conclusion regarding $x_i$. So, based on the inputs of Alice and Bob and the output of Bob, the Manager constructs the conditional probabilities $p^s(i|x_i,j)$ after repeating the procedure for many rounds. These are the probabilities of successfully identifying `$i$' when a random `$x_i$' and `$j$' is given. If Alice communicates a classical bit then it is known that 
\begin{equation}
\sum_{i,j} p^s(i|x_i,j) \leq \frac{5}{6}.
\end{equation}
The above inequality is due to the optimal success probability achievable in the classical case. If the above inequality is violated then Alice sends a qubit to Bob. We further want to prove that if individual values of $p^s(i|x_i,j)$ is greater than $1/2$ $\forall i,j$, then the ensemble which is used by Alice for encoding in the quantum case must contain quantum coherence \cite{Streltsov17}. We define this coherence as `universal'. However, we start with the following definition:

\begin{definition}{[Coherence of an ensemble]} We consider an ensemble of pure states $\{\ket{\phi_i}\}_i$. If it is not possible to find any basis under which all states $\ket{\phi_i}$ are diagonal then we say that the ensemble contains coherence.
\end{definition}

\noindent
Notice that in the above definition we have removed the basis dependence of coherence. This is because if an ensemble contains coherence according to the above definition then it does so with respect to all bases. Therefore, we say that the coherence of the present kind is a {\it universal} one. We now proceed to prove the following:

\begin{proposition}
We assume that Alice uses the encoding $x_i\rightarrow\ket{\phi_i}$. Now, the ensemble $\{\ket{\phi_1}, \ket{\phi_2}, \ket{\phi_3}\}$ contains universal coherence if  $p^s(i|x_i,j)>1/2$ $\forall i,j = 1,2,3$.
\end{proposition}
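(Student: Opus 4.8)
The plan is to prove the statement by contraposition: I will assume that the ensemble $\{\ket{\phi_1}, \ket{\phi_2}, \ket{\phi_3}\}$ does \emph{not} contain universal coherence and derive that the success condition $p^s(i|x_i,j)>1/2$ must fail for at least one pair $(i,j)$. The first thing to pin down is what the negation of universal coherence means for pure qubit states. Since the communicated system is two-dimensional, a common basis in which every $\ketbra{\phi_i}{\phi_i}$ is diagonal consists of exactly two orthonormal vectors $\{\ket{e_0}, \ket{e_1}\}$, and a \emph{pure} state is diagonal in this basis precisely when it coincides, as a ray, with one of the two basis vectors. Hence, if the ensemble lacks universal coherence, each of $\ket{\phi_1}, \ket{\phi_2}, \ket{\phi_3}$ must equal either $\ket{e_0}$ or $\ket{e_1}$ up to a global phase.

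Next I would run a pigeonhole argument. There are three states but only two admissible rays $\ket{e_0}, \ket{e_1}$, so at least two of the states, say $\ket{\phi_a}$ and $\ket{\phi_b}$ with $a\neq b$, must coincide as rays. The unordered pair $\{a,b\}$ is necessarily one of $\{1,2\}$, $\{2,3\}$, $\{3,1\}$, and therefore equals some $S_j$. For that particular $j$, Bob is tasked with discriminating two \emph{identical} states, which is the crux of the obstruction.

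The final step links this coincidence to the probabilities. When the two states labelled by $S_j$ are identical, every measurement Bob can perform yields the same outcome statistics regardless of which value the Manager actually sent; consequently his decoding rule assigns the two candidate labels fixed probabilities $q$ and $1-q$ independent of the true input. This forces $p^s(a|x_a,j)=q$ and $p^s(b|x_b,j)=1-q$, whose sum is $1$, so both cannot simultaneously exceed $1/2$. This contradicts the hypothesis that $p^s(i|x_i,j)>1/2$ for all $i,j$, completing the contrapositive and hence establishing that the ensemble contains universal coherence. (Equivalently, one may phrase the last step via the Helstrom bound already quoted: equal states give overlap $1$ and optimal discrimination probability exactly $1/2$.)

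I expect the genuinely delicate point to be the reduction of ``diagonal in a common basis'' to ``coincides with a basis vector'' for pure states, together with the observation that the hypothesis concerns \emph{individual} conditional success probabilities rather than their average. It is precisely this individual formulation that rules out the trade-off $q$ versus $1-q$ and makes the two-dimensionality of the qubit channel bite through the pigeonhole count; once these are in place, the argument is immediate.
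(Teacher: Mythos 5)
Your proof is correct and is essentially the paper's own argument run in contrapositive form: the paper shows that $p^s(i|x_i,j)>1/2$ forces the three states to be pairwise distinct rays and that $\ket{\phi_3}$, being a nontrivial superposition of $\ket{\phi_1}$ and $\ket{\phi_2}$, cannot be diagonal in any basis diagonalizing those two --- which is exactly your pigeonhole observation read in the opposite direction. Your handling of the last step is in fact slightly more careful than the paper's (you note that the two conditional success probabilities for identical states sum to one, so both cannot exceed $1/2$, rather than asserting each ``becomes $1/2$''), but the underlying ideas coincide.
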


\begin{proof}
All three states \{$\ket{\phi_1}$, $\ket{\phi_2}$, $\ket{\phi_3}$\} must be different, i.e., pairwise linearly independent, otherwise, corresponding to at least one `$j$' value $p^s(i|x_i,j)$ becomes $1/2$. This is not desired. We first consider two states $\ket{\phi_1}$ and $\ket{\phi_2}$. They are linearly independent and thus, they form a basis for a two-dimensional Hilbert space. So, $\ket{\phi_3}$ can be written as $a\ket{\phi_1}+b\ket{\phi_2}$, where $a,b$ can be complex numbers such that $|a|,|b|>0$. These complex numbers are chosen in such a way that $\ket{\phi_3}$ is a valid quantum state. Clearly, $\ket{\phi_1}$ and $\ket{\phi_2}$ might be diagonal in a basis but with respect to the same basis $\ket{\phi_3}$ must not be diagonal. Since, $\ket{\phi_1}$ and $\ket{\phi_2}$ can be any state, it is exhibiting that the ensemble $\{\ket{\phi_1}, \ket{\phi_2}, \ket{\phi_3}\}$ always contains coherence if we want to satisfy the inequality $p^s(i|x_i,j)>1/2$ $\forall i,j = 1,2,3$.
\end{proof}

\noindent
Using the above proposition, it is clear why it is possible to detect coherence of an ensemble via our communication task. Furthermore, considering the semi-DI method, we can detect universal coherence of any ensemble, which is used by Alice for encoding in the quantum case. 

\subsection{Four dimensional random variable}
We first define the task for $d=4$ briefly following the $d=3$ case. A value $x_i\in$\{$x_1$, $x_2$, $x_3$, $x_4$\} of the random variable $X$ is sent to Alice. At the same time, Bob receives a `$j$' value, where `$j$' is associated with $S_j$. In this case also, $S_j$ contains two values of $X$, the original value which is sent to Alice along with some other value. To help Bob in identifying the value of $X$, Alice is allowed to send either a cbit or a qubit. Possible `$S_j$'s are: $S_1 = \{x_1, x_2\}$, $S_2 = \{x_2, x_3\}$, $S_3 = \{x_3, x_4\}$, $S_4 = \{x_4, x_1\}$, $S_5 = \{x_1, x_3\}$, $S_6 = \{x_2, x_4\}$. So, the game goes here in the following way. The Manager sends $x_i\in$ \{$x_1$, $x_2$, $x_3$, $x_4$\} to Alice. At the same time, a `$j$' value is sent to Bob by the Manager, where $j=1,2,\dots,6$. This implies that Bob is instructed that the value which is sent to Alice, belongs to $S_j$. However, it is important to mention that for a given $x_i$, not all values of `$j$' are allowed. For example, if $x_1$ is sent to Alice, the Manager sends any one `$j$' value, where $j\in\{1,4,5\}$. We also mention that Bob does not know which value of $X$ is sent to Alice, likewise, which `$j$' value is sent to Bob is not known to Alice. Finally, with the help of Alice, Bob tries to identify the value of $X$ which is sent to Alice and this should be done by reducing the error probability. The task can be accomplished if and only if Bob is able to identify the value of $X$, which is sent to Alice by the Manager, with a success probability greater than half, no matter what information he receives or which value is sent to Alice.

{\it Classical case}: We first ask if the task can be accomplished by sending a cbit. Here the dimension of $X$ is increased. The rest goes in the same way as in the previous case, i.e., Alice sends a cbit or a qubit, Bob receives a `$j$' value, and $S_j$ contains only two values. Clearly, with increasing dimension, it cannot be the case that the task is accomplished by sending a cbit. A formal proof can be obtained via the same line of arguments as given in the proof of the Proposition \ref{prop1}.

The second thing we ask is about the maximum success probability achievable through one cbit of communication from Alice's side to Bob. Following the above argument, we can infer that with increasing dimension of $X$, one cannot achieve a better success probability compared to $d=3$ case. Interestingly, here the success probability $p=0.83$ can be achieved through the following encoding $x_1, x_2 \rightarrow 0$ and $x_3,x_4 \rightarrow 1$. Thus, for $d=4$ also, the maximum success probability is $p=0.83$ approx.

{\it Quantum case}: To prove that the task can be accomplished when Alice is allowed to send a qubit to Bob, they can plan for the following encoding strategy.
\begin{equation}\label{eq7}
\begin{array}{c}
x_1 \rightarrow \ket{0},~ x_2 \rightarrow \frac{1}{\sqrt{3}}(\ket{0}-\sqrt{2}\ket{1}),\\[1 ex] x_3 \rightarrow \frac{1}{\sqrt{3}}(\ket{0}-\sqrt{2}e^{2\pi i/3}\ket{1}),\\[1 ex] x_4 \rightarrow \frac{1}{\sqrt{3}}(\ket{0}-\sqrt{2}e^{-2\pi i/3}\ket{1}).
\end{array}
\end{equation}
It is easy to check that using the above encoding the parties can achieve $p>1/2$ $\forall i,j$ and an average success probability ($1/2 + 1/2\times\sqrt{2/3}$) = 1/2 + 1/$\sqrt{6}$ = 0.908 approx. Clearly, they can accomplish the task as well as they can also achieve an average success probability which is better compared to the classical case. We mention that the states in the above encoding strategy form the so-called {\it tetrad} ensemble (see \cite{Barnett02} and references therein).

We now provide a detailed derivation regarding the general encoding strategy for $d=4$. If Alice considers the encoding $x_i\rightarrow\ket{\phi_i}$ $\forall i=1,2,3,4$, then, the states $\{\ket{\phi_i}\}$ must be pairwise linearly independent. For this purpose, we choose the following: $\ket{\phi_1} =\ket{\phi}$, $\ket{\phi_2}=a\ket{\phi}+b\ket{\phi^\perp}$, $\ket{\phi_3}=c\ket{\phi}+d\ket{\phi^\perp}$, $\ket{\phi_4}=e\ket{\phi}+f\ket{\phi^\perp}$, $a,b,c,d,e,f$ are complex numbers, $|a|^2+|b|^2$ = $|c|^2+|d|^2$ = $|e|^2+|f|^2$ = 1. The numbers $a,b,c,d,e,f$ should be chosen in such a way that $\ket{\phi_2}$, $\ket{\phi_3}$, $\ket{\phi_4}$ become different states, i.e., they must be pairwise linearly independent. We now consider the error probabilities $p_e^j$ for each `$j$' values. The probabilities $p_e^1$, $p_e^4$, $p_e^5$ are given by-
\begin{equation}
\begin{array}{l}
p_e^1=\frac{1}{2}\left(1-\sqrt{1-|\langle\phi_1|\phi_2\rangle|^2}\right)=\frac{1}{2}(1-|b|),\\[1 ex]

p_e^4=\frac{1}{2}\left(1-\sqrt{1-|\langle\phi_1|\phi_4\rangle|^2}\right)=\frac{1}{2}(1-|f|),\\[1 ex]

p_e^5=\frac{1}{2}\left(1-\sqrt{1-|\langle\phi_1|\phi_3\rangle|^2}\right)=\frac{1}{2}(1-|d|).
\end{array}
\end{equation}
Clearly, $|b|, |d|, |f|$ must be nonzero. Next, we think about $p_e^2$ and $p_e^6$. But before we calculate these probabilities, we express $\ket{\phi_3}$ and $\ket{\phi_4}$ in terms of $\ket{\phi_2}$ and $\ket{\phi_2^\perp}$, where $\ket{\phi_2^\perp}$ = $b^\ast\ket{\phi}-a^\ast\ket{\phi^\perp}$, $a^\ast, b^\ast$ are complex conjugates of $a,b$ respectively. The expressions are given by- $\ket{\phi_3}$ = $(ca^\ast+db^\ast)\ket{\phi_2}+(cb-da)\ket{\phi_2^\perp}$ and $\ket{\phi_4}$ = $(ea^\ast+fb^\ast)\ket{\phi_2}+(eb-fa)\ket{\phi_2^\perp}$, where $|ca^\ast+db^\ast|^2+|cb-da|^2$ = 1 = $|ea^\ast+fb^\ast|^2+|eb-fa|^2$. Now, the expressions of $p_e^2$ and $p_e^6$ are given by-
\begin{equation}
\begin{array}{l}
p_e^2=\frac{1}{2}\left(1-\sqrt{1-|\langle\phi_2|\phi_3\rangle|^2}\right)=\frac{1}{2}(1-|cb-da|),\\[1 ex]

p_e^6=\frac{1}{2}\left(1-\sqrt{1-|\langle\phi_2|\phi_4\rangle|^2}\right)=\frac{1}{2}(1-|eb-fa|).
\end{array}
\end{equation}
Here also, $|cb-da|, |eb-fa|$ must be nonzero. Finally, we want to calculate $p_e^3$. Before that we express $\ket{\phi_4}$ in terms of $\ket{\phi_3}$ and $\ket{\phi_3^\perp}$, where $\ket{\phi_3^\perp}$ = $d^\ast\ket{\phi}-c^\ast\ket{\phi^\perp}$, $c^\ast,d^\ast$ are obtained by taking complex conjugation of $c,d$ respectively. So, $\ket{\phi_4}$ can be rewritten as $(ec^\ast+fd^\ast)\ket{\phi_3}+(ed-fc)\ket{\phi_3^\perp}$, $|ec^\ast+fd^\ast|^2+|ed-fc|^2$ = 1. Therefore, $p_e^3$ is given by-
\begin{equation}
p_e^3 = \frac{1}{2}\left(1-\sqrt{1-|\langle\phi_3|\phi_4\rangle|^2}\right) = \frac{1}{2}(1-|ed-fc|),
\end{equation}
where $|ed-fc|$ is nonzero. So, average probability of error is $p_e$ = $(1/6)\sum_{j=1}^6 p_e^j$ and it is given by-
\begin{equation}\label{eq11}
\small
\begin{array}{l}
p_e = \frac{1}{2}-\frac{1}{12}(|b|+|d|+|f|+|cb-da|+|eb-fa|+|ed-fc|)\\[1 ex]
= \frac{1}{2}+\frac{1}{12}(|b|+|d|+|f|)-\frac{1}{12}(\Delta_1+\Delta_2+\Delta_3),
\end{array}
\end{equation}
\normalsize
where $\Delta_1$ = $|b|+|d|+|cb-da|$, $\Delta_2$ = $|b|+|f|+|eb-fa|$, and $\Delta_3$ = $|d|+|f|+|ed-fc|$. Clearly, to reduce $p_e$, one has to increase the quantities $\Delta_i$, $\forall i=1,2,3$. From the $d=3$ case, it is already known that $\Delta_i$ can be maximized when corresponding states in encoding form a trine ensemble. So, the whole problem boils down to the following: in $d=4$ case there are four states from which if we pick any three states then they must form the trine ensemble. However, it is the tetrad ensemble from which if we pick any three states then they are equidistant. Thus, the best encoding strategy that we find is due to the states of the tetrad ensemble. It is also important to add the following here. If we consider an arbitrary dimension (finite) for the random variable, then also we can express the average probability of error like we did in (\ref{eq11}). So, using the same logic as explained in the above, the best encoding strategy that we find is due to a set of states where the states are equidistant.

Nevertheless, an important feature of our task is that one can achieve better success probability in the quantum case compared to existing task(s). For example, we compare with the task of Quantum random access code (QRAC). One can see that when Alice is given a two-bit string and she is allowed to send a qubit to Bob, the optimal success probability is 0.85. This is clearly less than the present case of $d=4$. Now, given a practical situation, it is always important to have a success probability which is greater. This is because bigger success probability suggests relatively easier experimental verification. Therefore, we argue that the present task is relatively experiment friendly. When the dimension of $X$ is five or six, in these cases also, it is possible to show that the success probability is not only higher in the quantum case compared to the classical case, it is even higher than that of QRAC.

We mention that in our case the inputs of Alice and Bob are correlated. For example, when $d=4$ if $x_1$ is sent to Alice then $j\neq2,3,6$. Now, this correlation is the key factor why it is possible to achieve better success probability here compared to QRAC. In this context, we refer to Refs.~\cite{Hall10, Barrett11} where it is discussed that correlated inputs can provide better results compared to uncorrelated inputs in the nonlocality scenario.

\subsection{A large quantum-classical separation}
We now consider that the random variable is $d$-dimensional and following the $d=3$ case, we briefly provide the description of the game. So, the value of $X$ is given by- $x_i\in\{x_1, x_2, \dots, x_d\}$. A value of $X$ is sent to Alice randomly. At the same time, Bob receives a `$j$' value, where $j$ is associated with $S_j$. $S_j$ contains only two values of $X$, the original value which is sent to Alice and some other value. By sending a `$j$' value to Bob, he is basically instructed that the value, which is sent to Alice, belongs to $S_j$. There  are total $^dC_2$ = $d(d-1)/2$ ways to define various $S_j$s. However, for a given $x_i$, there are $d-1$ allowed values of `$j$'. We mention that Bob does not know which value of $X$ is sent to Alice, likewise, which `$j$' value is sent to Bob is not known to Alice. The task of Bob is to identify the value of $X$ which is sent to Alice. Here non-zero error probability is allowed. In this regard, Alice sends either cbit or a qubit to help Bob. The task can be accomplished if and only if it is always possible to achieve a success probability $p>1/2$ in identifying the value of $X$ no matter which value is sent to Alice or what information is obtained by Bob. It is relatively easier to prove that when Alice is allowed to send only a cbit to Bob, the task cannot be accomplished. However, in the following, we provide stronger observation.

\begin{proposition}
The task can be accomplished if Alice sends a qubit to Bob. Again, it may require high dimensional classical communication for accomplishing the task. Thus, it establishes a large separation between the quantum and the classical communication. 
\end{proposition}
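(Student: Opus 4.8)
The plan is to prove the three assertions separately: that a single qubit always suffices, that a classical message must have dimension at least $d$, and hence that the quantum--classical gap grows without bound as $d\to\infty$.

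\textbf{Qubit suffices.} First I would exhibit a qubit encoding that wins for every $d$, generalizing the trine and tetrad constructions of (\ref{eq4}) and (\ref{eq7}). It is enough to choose $d$ pairwise distinct pure qubit states $\ket{\phi_1},\dots,\ket{\phi_d}$ (for instance $d$ distinct points on a great circle of the Bloch sphere, or the trine/tetrad ensembles for $d=3,4$). For each set $S_j=\{x_a,x_b\}$ Bob knows the two candidate states and performs the Helstrom measurement. Since the states are distinct, $|\langle\phi_a|\phi_b\rangle|<1$, so the success probability $\tfrac12(1+\sqrt{1-|\langle\phi_a|\phi_b\rangle|^2})$ is strictly above $1/2$; moreover the reflection symmetry exchanging the two equiprobable pure states forces the two conditional successes to coincide, so each individual $p^s(i|x_i,j)$ exceeds $1/2$. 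This holds for every pair and every $d$, hence the game is won with a single qubit regardless of the dimension.

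\textbf{Classical dimension bound.} Next I would model Alice's classical message as a (deterministic) encoding $f\colon\{x_1,\dots,x_d\}\to\mathcal{M}$ into a $k$-level alphabet and show the winning condition forces $k\ge d$. The key point is that every unordered pair $\{x_a,x_b\}$ occurs as some $S_j$ (there are $\binom{d}{2}$ of them). If $f(x_a)=f(x_b)$ for even a single pair, then on inputs $x_a$ and $x_b$ Bob receives an identical message, so for that $j$ his output statistics are the same in both cases and $p^s(a|x_a,j)+p^s(b|x_b,j)\le 1$; at least one term is then $\le 1/2$, violating the winning requirement. Hence $f$ must be injective, which forces $k\ge d$, i.e. Alice must send a $d$-level classical system, equivalently at least $\lceil\log_2 d\rceil$ cbits. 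Combining this with the qubit construction, the quantum cost stays fixed at one qubit while the classical cost grows like $\log_2 d$; letting $d\to\infty$ yields the advertised arbitrarily large separation.

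\textbf{Main obstacle.} The delicate step is the classical converse. The pigeonhole argument above is airtight for \emph{deterministic} encodings, the setting implicit in the proof of Proposition~\ref{prop1}, where injectivity of $f$ is forced and hence $k\ge d$. The subtlety I would flag is that the winning criterion is the per-configuration condition $p^s(i|x_i,j)>1/2$ rather than an average, and for such a criterion local randomness can be powerful: a soft randomized decoder can turn any two \emph{distinct} message distributions into two conditional successes that both exceed $1/2$. To keep the dimension separation valid one must therefore work within the deterministic-encoding model adopted in the paper, rather than allowing arbitrary randomized classical strategies. Pinning down this model is the real crux; once it is fixed, the qubit construction and the pigeonhole bound are routine, and the final separation follows immediately.
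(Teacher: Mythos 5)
Your proposal is correct and follows essentially the same route as the paper: pairwise distinct (linearly independent) qubit states with Helstrom measurements give $p>1/2$ for every pair in the quantum part, and a pigeonhole/injectivity argument on deterministic encodings gives the classical lower bound, which is exactly the paper's argument that a $(d-1)$-level classical message forces a collision on some $S_j$. Your closing caveat about randomized classical strategies is a sound observation, but it does not place you at odds with the paper, whose proof of Proposition~\ref{prop1} likewise models Alice's message as a deterministic function of $x_i$.
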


\begin{proof}
For achieving a success probability $p>1/2$, it is necessary to obey the following during encoding. Suppose, a `$j$' value is sent to Bob. Then, corresponding to this `$j$' value we assume that $S_j$ contains $x_i$ and $x_k$. Now, they should be mapped like: $x_i\rightarrow \ket{\phi_i}$ and $x_k\rightarrow \ket{\phi_k}$, where $\ket{\phi_i}$ and $\ket{\phi_k}$ are not same states, i.e., they are linearly independent. Ultimately, Alice should apply an encoding $x_i\rightarrow\ket{\phi_i}$ $\forall i=1,2,\dots d$, such that if we pick an arbitrary pair of states $\ket{\phi_i}$ and $\ket{\phi_k}$ then, they are linearly independent. This is possible since $\ket{0}$ and $\ket{1}$ can be superposed in infinitely many ways. So, if $\ket{\phi_i}$ takes the form $a\ket{0}+b\ket{1}$, then, one can take $\ket{\phi_k}$ as $c\ket{0}+d\ket{1}$, $a,b,c,d$ are positive numbers such that $a^2+b^2=1=c^2+d^2$. We can choose the numbers in a way that $a/b\neq c/d$, which ensures linear independence. In this way, the task can be accomplished if Alice is allowed to send a qubit to Bob, does not matter what is the dimension of $X$. These complete the proof of the first part of the proposition.

For the second part, we first recall that the information which is sent to Bob is not known to Alice and therefore, it is not useful for encoding. We then assume that Alice is allowed to send $(d-1)$-level classical information when the dimension of $X$ is $d$. Clearly, Alice must encode some property associated with the values of $X$ into a function which has no more than $(d-1)$ outputs. If this is the case then there exist at least two values of $X$, corresponding to which the computed function outputs same value. This implies that there exists a `$j$' value, corresponding to which the value of $X$ cannot be identified with a probability better than $1/2$. Thus, for a large value of $d$, it is required to send classical information of high dimension. In fact, it is necessary to communicate $d$-level classical information to achieve the goal of the present task. This directly follows from the fact that communication of $(d-1)$-level classical information from Alice to Bob is not sufficient to achieve the goal of the present task. Now, it is quite straightforward to see that communication of $d$-level classical information is also sufficient to achieve the goal of the present task.

In this way we establish a large gap between the quantum and the classical communication, i.e., quantum communication provides a very large advantage over its classical counterpart when the present task is concerned. Furthermore, theoretically this advantage can be stretched to an arbitrary height.
\end{proof}

From the above proposition, it is clear that quantum communication can provide a very large advantage over its classical counterpart when the present task is concerned. In this context, it is important to mention about Ref.~\cite{Renner23} where classical cost of qubit transmission is discussed in terms of classical communication and shared randomness. So, with respect to such a result, apparently, it is not possible to have large separation between the quantum and the classical communication. But in our case when we consider high dimensional random variable, we do not consider shared randomness along with classical communication like we have done during $d=3$ case. This is the reason why we get a large separation here.

\section{Conclusion}
In the era of developing quantum technologies, advantage of quantum resources and certification of security are two of the most significant aspects of studying quantum information and computation. In this work we have studied both aspects by first setting a unified scenario of a tripartite communication task and then, connecting it with semi-device-independence. 

In particular, we have introduced the following communication task. There are three parties - the Manager, Alice, and Bob. The Manager sends a value of a random variable to Alice and at the same time the Manager also sends some partial information regarding that value to Bob. Bob does not know which value is sent to Alice, likewise, Alice also does not know what information is sent to Bob. The goal of the task can be achieved if and only if the value of the random variable, which is sent to Alice by the Manager, can be identified by Bob with success probability greater than half. Actually, there will be a non-zero error probability. However, to help Bob, Alice sends a limited amount of classical or quantum information to him.

For this task, we have shown the following. When Alice sends a qubit the goal of the task can be achieved. On the other hand, a cbit communication is not sufficient for achieving the goal of the task. Particularly, cbit communication enables Bob to guess certain information sent from the Manager to Alice with probability 1, while guessing other information randomly. In contrast, quantum coherence enables Bob to guess the value of the variable in Alice's hands with success probability always greater than $1/2$. This manifests one of the main differences between the cbit communication and the qubit communication. Moreover, the optimal success probability in the overall process for a qubit communication is higher than the optimal success probability for a cbit communication. In fact, this success probability is higher for qubit communication even if along with cbit communication Alice and Bob share randomness which allows mixing of classical strategies. Thus, it not only establishes quantum advantage, it also exhibit a more prominent non-classical feature, present in this task. Actually, we have obtained a higher success probability compared to other tasks and therefore, this task might be experiment friendly. For applications, we have shown that using present theory it is possible to detect the quantumness of communication in a semi-device-independent manner. We further have provided sufficient condition to detect universal coherence present in an ensemble within the same set up. Finally, we have shown that for a high dimensional random variable, to achieve the goal of the present task, it may require a very high dimensional classical communication. On the other hand, the goal can be achieved via only a qubit communication. This establishes a very large separation between the quantum and the classical communication. 

For further research, it will be interesting if it is possible to analyze our findings in light of the ``information causality principle'' \cite{Pawowski09}. Furthermore, it will also be interesting if it is possible to demonstrate our task experimentally. In this regard, we mention that we have used Helstrom bound in many places to obtain optimal figures of merit and this bound is known to be achievable in a practical scenario.  

\section*{Acknowledgments}
This work was supported by the ``Quantum Optical Technologies'' project, carried out within the International Research Agendas programme of the Foundation for Polish Science co-financed by the European Union under the European Regional Development Fund, and the National Science Centre Poland within the QuantERA II Programme (Grant No.~2021/03/Y/ST2/00178, acronym ExTRaQT) that has received funding from the European Union's Horizon 2020 research and innovation programme under Grant Agreement No.~101017733.

\bibliographystyle{apsrev4-2}
\bibliography{ref}

%apsrev4-2.bst 2019-01-14 (MD) hand-edited version of apsrev4-1.bst
%Control: key (0)
%Control: author (72) initials jnrlst
%Control: editor formatted (1) identically to author
%Control: production of article title (-1) disabled
%Control: page (0) single
%Control: year (1) truncated
%Control: production of eprint (0) enabled
\begin{thebibliography}{47}%
\makeatletter
\providecommand \@ifxundefined [1]{%
 \@ifx{#1\undefined}
}%
\providecommand \@ifnum [1]{%
 \ifnum #1\expandafter \@firstoftwo
 \else \expandafter \@secondoftwo
 \fi
}%
\providecommand \@ifx [1]{%
 \ifx #1\expandafter \@firstoftwo
 \else \expandafter \@secondoftwo
 \fi
}%
\providecommand \natexlab [1]{#1}%
\providecommand \enquote  [1]{``#1''}%
\providecommand \bibnamefont  [1]{#1}%
\providecommand \bibfnamefont [1]{#1}%
\providecommand \citenamefont [1]{#1}%
\providecommand \href@noop [0]{\@secondoftwo}%
\providecommand \href [0]{\begingroup \@sanitize@url \@href}%
\providecommand \@href[1]{\@@startlink{#1}\@@href}%
\providecommand \@@href[1]{\endgroup#1\@@endlink}%
\providecommand \@sanitize@url [0]{\catcode `\\12\catcode `\$12\catcode
  `\&12\catcode `\#12\catcode `\^12\catcode `\_12\catcode `\%12\relax}%
\providecommand \@@startlink[1]{}%
\providecommand \@@endlink[0]{}%
\providecommand \url  [0]{\begingroup\@sanitize@url \@url }%
\providecommand \@url [1]{\endgroup\@href {#1}{\urlprefix }}%
\providecommand \urlprefix  [0]{URL }%
\providecommand \Eprint [0]{\href }%
\providecommand \doibase [0]{https://doi.org/}%
\providecommand \selectlanguage [0]{\@gobble}%
\providecommand \bibinfo  [0]{\@secondoftwo}%
\providecommand \bibfield  [0]{\@secondoftwo}%
\providecommand \translation [1]{[#1]}%
\providecommand \BibitemOpen [0]{}%
\providecommand \bibitemStop [0]{}%
\providecommand \bibitemNoStop [0]{.\EOS\space}%
\providecommand \EOS [0]{\spacefactor3000\relax}%
\providecommand \BibitemShut  [1]{\csname bibitem#1\endcsname}%
\let\auto@bib@innerbib\@empty
%</preamble>
\bibitem [{\citenamefont {Holevo}(1973)}]{Holevo73}%
  \BibitemOpen
  \bibfield  {author} {\bibinfo {author} {\bibfnamefont {A.~S.}\ \bibnamefont
  {Holevo}},\ }\href {http://mi.mathnet.ru/ppi903} {\bibfield  {journal}
  {\bibinfo  {journal} {Prob. Inf. Trans.}\ }\textbf {\bibinfo {volume} {9}},\
  \bibinfo {pages} {177} (\bibinfo {year} {1973})}\BibitemShut {NoStop}%
\bibitem [{\citenamefont {Frenkel}\ and\ \citenamefont
  {Weiner}(2015)}]{Frenkel15}%
  \BibitemOpen
  \bibfield  {author} {\bibinfo {author} {\bibfnamefont {P.~E.}\ \bibnamefont
  {Frenkel}}\ and\ \bibinfo {author} {\bibfnamefont {M.}~\bibnamefont
  {Weiner}},\ }\href {https://doi.org/10.1007/s00220-015-2463-0} {\bibfield
  {journal} {\bibinfo  {journal} {Commun. Math. Phys.}\ }\textbf {\bibinfo
  {volume} {340}},\ \bibinfo {pages} {563} (\bibinfo {year}
  {2015})}\BibitemShut {NoStop}%
\bibitem [{\citenamefont {Wiesner}(1983)}]{Wiesner83}%
  \BibitemOpen
  \bibfield  {author} {\bibinfo {author} {\bibfnamefont {S.}~\bibnamefont
  {Wiesner}},\ }\href {https://doi.org/10.1145/1008908.1008920} {\bibfield
  {journal} {\bibinfo  {journal} {ACM SIGACT News}\ }\textbf {\bibinfo {volume}
  {15}},\ \bibinfo {pages} {78} (\bibinfo {year} {1983})}\BibitemShut {NoStop}%
\bibitem [{\citenamefont {Ambainis}\ \emph {et~al.}(1999)\citenamefont
  {Ambainis}, \citenamefont {Nayak}, \citenamefont {Ta-Shma},\ and\
  \citenamefont {Vazirani}}]{Ambainis99}%
  \BibitemOpen
  \bibfield  {author} {\bibinfo {author} {\bibfnamefont {A.}~\bibnamefont
  {Ambainis}}, \bibinfo {author} {\bibfnamefont {A.}~\bibnamefont {Nayak}},
  \bibinfo {author} {\bibfnamefont {A.}~\bibnamefont {Ta-Shma}},\ and\ \bibinfo
  {author} {\bibfnamefont {U.}~\bibnamefont {Vazirani}},\ }\href@noop {}
  {\bibfield  {journal} {\bibinfo  {journal} {{\it Proceedings of the 31st
  Annual ACM Symposium on Theory of Computing} (STOC'99)}\ ,\ \bibinfo {pages}
  {pp.~376–383}} (\bibinfo {year} {1999})}\BibitemShut {NoStop}%
\bibitem [{\citenamefont {Ambainis}\ \emph {et~al.}(2002)\citenamefont
  {Ambainis}, \citenamefont {Nayak}, \citenamefont {Ta-Shma},\ and\
  \citenamefont {Vazirani}}]{Ambainis02}%
  \BibitemOpen
  \bibfield  {author} {\bibinfo {author} {\bibfnamefont {A.}~\bibnamefont
  {Ambainis}}, \bibinfo {author} {\bibfnamefont {A.}~\bibnamefont {Nayak}},
  \bibinfo {author} {\bibfnamefont {A.}~\bibnamefont {Ta-Shma}},\ and\ \bibinfo
  {author} {\bibfnamefont {U.}~\bibnamefont {Vazirani}},\ }\href
  {https://doi.org/10.1145/581771.581773} {\bibfield  {journal} {\bibinfo
  {journal} {Journal of the ACM}\ }\textbf {\bibinfo {volume} {49}},\ \bibinfo
  {pages} {496} (\bibinfo {year} {2002})}\BibitemShut {NoStop}%
\bibitem [{\citenamefont {Hayashi}\ \emph {et~al.}(2006)\citenamefont
  {Hayashi}, \citenamefont {Iwama}, \citenamefont {Nishimura}, \citenamefont
  {Raymond},\ and\ \citenamefont {Yamashita}}]{Hayashi06-1}%
  \BibitemOpen
  \bibfield  {author} {\bibinfo {author} {\bibfnamefont {M.}~\bibnamefont
  {Hayashi}}, \bibinfo {author} {\bibfnamefont {K.}~\bibnamefont {Iwama}},
  \bibinfo {author} {\bibfnamefont {H.}~\bibnamefont {Nishimura}}, \bibinfo
  {author} {\bibfnamefont {R.}~\bibnamefont {Raymond}},\ and\ \bibinfo {author}
  {\bibfnamefont {S.}~\bibnamefont {Yamashita}},\ }\href
  {https://doi.org/10.1088/1367-2630/8/8/129} {\bibfield  {journal} {\bibinfo
  {journal} {New J. Phys.}\ }\textbf {\bibinfo {volume} {8}},\ \bibinfo {pages}
  {129} (\bibinfo {year} {2006})}\BibitemShut {NoStop}%
\bibitem [{\citenamefont {Tavakoli}\ \emph {et~al.}(2015)\citenamefont
  {Tavakoli}, \citenamefont {Hameedi}, \citenamefont {Marques},\ and\
  \citenamefont {Bourennane}}]{Tavakoli15}%
  \BibitemOpen
  \bibfield  {author} {\bibinfo {author} {\bibfnamefont {A.}~\bibnamefont
  {Tavakoli}}, \bibinfo {author} {\bibfnamefont {A.}~\bibnamefont {Hameedi}},
  \bibinfo {author} {\bibfnamefont {B.}~\bibnamefont {Marques}},\ and\ \bibinfo
  {author} {\bibfnamefont {M.}~\bibnamefont {Bourennane}},\ }\href
  {https://doi.org/10.1103/PhysRevLett.114.170502} {\bibfield  {journal}
  {\bibinfo  {journal} {Phys. Rev. Lett.}\ }\textbf {\bibinfo {volume} {114}},\
  \bibinfo {pages} {170502} (\bibinfo {year} {2015})}\BibitemShut {NoStop}%
\bibitem [{\citenamefont {Vaisakh}\ \emph {et~al.}(2021)\citenamefont
  {Vaisakh}, \citenamefont {Patra}, \citenamefont {Janpandit}, \citenamefont
  {Sen}, \citenamefont {Banik},\ and\ \citenamefont {Chaturvedi}}]{Vaisakh21}%
  \BibitemOpen
  \bibfield  {author} {\bibinfo {author} {\bibfnamefont {M.}~\bibnamefont
  {Vaisakh}}, \bibinfo {author} {\bibfnamefont {R.~K.}\ \bibnamefont {Patra}},
  \bibinfo {author} {\bibfnamefont {M.}~\bibnamefont {Janpandit}}, \bibinfo
  {author} {\bibfnamefont {S.}~\bibnamefont {Sen}}, \bibinfo {author}
  {\bibfnamefont {M.}~\bibnamefont {Banik}},\ and\ \bibinfo {author}
  {\bibfnamefont {A.}~\bibnamefont {Chaturvedi}},\ }\href
  {https://doi.org/10.1103/PhysRevA.104.012420} {\bibfield  {journal} {\bibinfo
   {journal} {Phys. Rev. A}\ }\textbf {\bibinfo {volume} {104}},\ \bibinfo
  {pages} {012420} (\bibinfo {year} {2021})}\BibitemShut {NoStop}%
\bibitem [{\citenamefont {Mukherjee}\ and\ \citenamefont
  {Pan}(2021)}]{Mukherjee21}%
  \BibitemOpen
  \bibfield  {author} {\bibinfo {author} {\bibfnamefont {S.}~\bibnamefont
  {Mukherjee}}\ and\ \bibinfo {author} {\bibfnamefont {A.~K.}\ \bibnamefont
  {Pan}},\ }\href {https://doi.org/10.1103/PhysRevA.104.062214} {\bibfield
  {journal} {\bibinfo  {journal} {Phys. Rev. A}\ }\textbf {\bibinfo {volume}
  {104}},\ \bibinfo {pages} {062214} (\bibinfo {year} {2021})}\BibitemShut
  {NoStop}%
\bibitem [{\citenamefont {Das}\ \emph {et~al.}(2021)\citenamefont {Das},
  \citenamefont {Ghosal}, \citenamefont {Maity}, \citenamefont {Kanjilal},\
  and\ \citenamefont {Roy}}]{Das21}%
  \BibitemOpen
  \bibfield  {author} {\bibinfo {author} {\bibfnamefont {D.}~\bibnamefont
  {Das}}, \bibinfo {author} {\bibfnamefont {A.}~\bibnamefont {Ghosal}},
  \bibinfo {author} {\bibfnamefont {A.~G.}\ \bibnamefont {Maity}}, \bibinfo
  {author} {\bibfnamefont {S.}~\bibnamefont {Kanjilal}},\ and\ \bibinfo
  {author} {\bibfnamefont {A.}~\bibnamefont {Roy}},\ }\href
  {https://doi.org/10.1103/PhysRevA.104.L060602} {\bibfield  {journal}
  {\bibinfo  {journal} {Phys. Rev. A}\ }\textbf {\bibinfo {volume} {104}},\
  \bibinfo {pages} {L060602} (\bibinfo {year} {2021})}\BibitemShut {NoStop}%
\bibitem [{\citenamefont {Kerppo}(2022)}]{Kerppo22}%
  \BibitemOpen
  \bibfield  {author} {\bibinfo {author} {\bibfnamefont {O.}~\bibnamefont
  {Kerppo}},\ }\href {https://doi.org/10.1103/PhysRevA.105.062607} {\bibfield
  {journal} {\bibinfo  {journal} {Phys. Rev. A}\ }\textbf {\bibinfo {volume}
  {105}},\ \bibinfo {pages} {062607} (\bibinfo {year} {2022})}\BibitemShut
  {NoStop}%
\bibitem [{\citenamefont {Kanjilal}\ \emph {et~al.}(2023)\citenamefont
  {Kanjilal}, \citenamefont {Jebarathinam}, \citenamefont {Paterek},\ and\
  \citenamefont {Home}}]{Kanjilal23}%
  \BibitemOpen
  \bibfield  {author} {\bibinfo {author} {\bibfnamefont {S.}~\bibnamefont
  {Kanjilal}}, \bibinfo {author} {\bibfnamefont {C.}~\bibnamefont
  {Jebarathinam}}, \bibinfo {author} {\bibfnamefont {T.}~\bibnamefont
  {Paterek}},\ and\ \bibinfo {author} {\bibfnamefont {D.}~\bibnamefont
  {Home}},\ }\href {https://doi.org/10.1103/PhysRevA.108.012617} {\bibfield
  {journal} {\bibinfo  {journal} {Phys. Rev. A}\ }\textbf {\bibinfo {volume}
  {108}},\ \bibinfo {pages} {012617} (\bibinfo {year} {2023})}\BibitemShut
  {NoStop}%
\bibitem [{\citenamefont {Chakraborty}\ \emph {et~al.}(2023)\citenamefont
  {Chakraborty}, \citenamefont {Doosti}, \citenamefont {Ma}, \citenamefont
  {Wadhwa}, \citenamefont {Arapinis},\ and\ \citenamefont
  {Kashefi}}]{Chakraborty23}%
  \BibitemOpen
  \bibfield  {author} {\bibinfo {author} {\bibfnamefont {K.}~\bibnamefont
  {Chakraborty}}, \bibinfo {author} {\bibfnamefont {M.}~\bibnamefont {Doosti}},
  \bibinfo {author} {\bibfnamefont {Y.}~\bibnamefont {Ma}}, \bibinfo {author}
  {\bibfnamefont {C.}~\bibnamefont {Wadhwa}}, \bibinfo {author} {\bibfnamefont
  {M.}~\bibnamefont {Arapinis}},\ and\ \bibinfo {author} {\bibfnamefont
  {E.}~\bibnamefont {Kashefi}},\ }\href
  {https://doi.org/10.22331/q-2023-05-23-1014} {\bibfield  {journal} {\bibinfo
  {journal} {{Quantum}}\ }\textbf {\bibinfo {volume} {7}},\ \bibinfo {pages}
  {1014} (\bibinfo {year} {2023})}\BibitemShut {NoStop}%
\bibitem [{\citenamefont {Saha}\ \emph {et~al.}(2023)\citenamefont {Saha},
  \citenamefont {Das}, \citenamefont {Das}, \citenamefont {Bhattacharya},\ and\
  \citenamefont {Majumdar}}]{Saha23}%
  \BibitemOpen
  \bibfield  {author} {\bibinfo {author} {\bibfnamefont {D.}~\bibnamefont
  {Saha}}, \bibinfo {author} {\bibfnamefont {D.}~\bibnamefont {Das}}, \bibinfo
  {author} {\bibfnamefont {A.~K.}\ \bibnamefont {Das}}, \bibinfo {author}
  {\bibfnamefont {B.}~\bibnamefont {Bhattacharya}},\ and\ \bibinfo {author}
  {\bibfnamefont {A.~S.}\ \bibnamefont {Majumdar}},\ }\href
  {https://doi.org/10.1103/PhysRevA.107.062210} {\bibfield  {journal} {\bibinfo
   {journal} {Phys. Rev. A}\ }\textbf {\bibinfo {volume} {107}},\ \bibinfo
  {pages} {062210} (\bibinfo {year} {2023})}\BibitemShut {NoStop}%
\bibitem [{\citenamefont {S.~S.}\ \emph {et~al.}(2023)\citenamefont {S.~S.},
  \citenamefont {Mukherjee},\ and\ \citenamefont {Pan}}]{Abhyoudai23}%
  \BibitemOpen
  \bibfield  {author} {\bibinfo {author} {\bibfnamefont {A.}~\bibnamefont
  {S.~S.}}, \bibinfo {author} {\bibfnamefont {S.}~\bibnamefont {Mukherjee}},\
  and\ \bibinfo {author} {\bibfnamefont {A.~K.}\ \bibnamefont {Pan}},\ }\href
  {https://doi.org/10.1103/PhysRevA.107.012411} {\bibfield  {journal} {\bibinfo
   {journal} {Phys. Rev. A}\ }\textbf {\bibinfo {volume} {107}},\ \bibinfo
  {pages} {012411} (\bibinfo {year} {2023})}\BibitemShut {NoStop}%
\bibitem [{\citenamefont {Heinosaari}\ \emph {et~al.}(2024)\citenamefont
  {Heinosaari}, \citenamefont {Kerppo}, \citenamefont {Lepp\"aj\"arvi},\ and\
  \citenamefont {Pl\'avala}}]{Heinosaari24-1}%
  \BibitemOpen
  \bibfield  {author} {\bibinfo {author} {\bibfnamefont {T.}~\bibnamefont
  {Heinosaari}}, \bibinfo {author} {\bibfnamefont {O.}~\bibnamefont {Kerppo}},
  \bibinfo {author} {\bibfnamefont {L.}~\bibnamefont {Lepp\"aj\"arvi}},\ and\
  \bibinfo {author} {\bibfnamefont {M.}~\bibnamefont {Pl\'avala}},\ }\href
  {https://doi.org/10.1103/PhysRevA.109.032627} {\bibfield  {journal} {\bibinfo
   {journal} {Phys. Rev. A}\ }\textbf {\bibinfo {volume} {109}},\ \bibinfo
  {pages} {032627} (\bibinfo {year} {2024})}\BibitemShut {NoStop}%
\bibitem [{\citenamefont {Patra}\ \emph {et~al.}(2024)\citenamefont {Patra},
  \citenamefont {Naik}, \citenamefont {Lobo}, \citenamefont {Sen},
  \citenamefont {Guha}, \citenamefont {Bhattacharya}, \citenamefont
  {Alimuddin},\ and\ \citenamefont {Banik}}]{Patra24}%
  \BibitemOpen
  \bibfield  {author} {\bibinfo {author} {\bibfnamefont {R.~K.}\ \bibnamefont
  {Patra}}, \bibinfo {author} {\bibfnamefont {S.~G.}\ \bibnamefont {Naik}},
  \bibinfo {author} {\bibfnamefont {E.~P.}\ \bibnamefont {Lobo}}, \bibinfo
  {author} {\bibfnamefont {S.}~\bibnamefont {Sen}}, \bibinfo {author}
  {\bibfnamefont {T.}~\bibnamefont {Guha}}, \bibinfo {author} {\bibfnamefont
  {S.~S.}\ \bibnamefont {Bhattacharya}}, \bibinfo {author} {\bibfnamefont
  {M.}~\bibnamefont {Alimuddin}},\ and\ \bibinfo {author} {\bibfnamefont
  {M.}~\bibnamefont {Banik}},\ }\href
  {https://doi.org/10.22331/q-2024-04-09-1315} {\bibfield  {journal} {\bibinfo
  {journal} {{Quantum}}\ }\textbf {\bibinfo {volume} {8}},\ \bibinfo {pages}
  {1315} (\bibinfo {year} {2024})}\BibitemShut {NoStop}%
\bibitem [{\citenamefont {Scala}\ \emph {et~al.}(2024)\citenamefont {Scala},
  \citenamefont {Ghoreishi},\ and\ \citenamefont {Paw\l{}owski}}]{Scala24}%
  \BibitemOpen
  \bibfield  {author} {\bibinfo {author} {\bibfnamefont {G.}~\bibnamefont
  {Scala}}, \bibinfo {author} {\bibfnamefont {S.~A.}\ \bibnamefont
  {Ghoreishi}},\ and\ \bibinfo {author} {\bibfnamefont {M.}~\bibnamefont
  {Paw\l{}owski}},\ }\href {https://doi.org/10.1103/PhysRevA.109.022230}
  {\bibfield  {journal} {\bibinfo  {journal} {Phys. Rev. A}\ }\textbf {\bibinfo
  {volume} {109}},\ \bibinfo {pages} {022230} (\bibinfo {year}
  {2024})}\BibitemShut {NoStop}%
\bibitem [{\citenamefont {Ding}\ \emph {et~al.}(2024)\citenamefont {Ding},
  \citenamefont {Lobo}, \citenamefont {Alimuddin}, \citenamefont {Xu},
  \citenamefont {Zhang}, \citenamefont {Banik}, \citenamefont {Bao},\ and\
  \citenamefont {Huang}}]{Ding24}%
  \BibitemOpen
  \bibfield  {author} {\bibinfo {author} {\bibfnamefont {C.}~\bibnamefont
  {Ding}}, \bibinfo {author} {\bibfnamefont {E.~P.}\ \bibnamefont {Lobo}},
  \bibinfo {author} {\bibfnamefont {M.}~\bibnamefont {Alimuddin}}, \bibinfo
  {author} {\bibfnamefont {X.-Y.}\ \bibnamefont {Xu}}, \bibinfo {author}
  {\bibfnamefont {S.}~\bibnamefont {Zhang}}, \bibinfo {author} {\bibfnamefont
  {M.}~\bibnamefont {Banik}}, \bibinfo {author} {\bibfnamefont {W.-S.}\
  \bibnamefont {Bao}},\ and\ \bibinfo {author} {\bibfnamefont {H.-L.}\
  \bibnamefont {Huang}},\ }\href
  {https://doi.org/10.1103/PhysRevLett.133.200201} {\bibfield  {journal}
  {\bibinfo  {journal} {Phys. Rev. Lett.}\ }\textbf {\bibinfo {volume} {133}},\
  \bibinfo {pages} {200201} (\bibinfo {year} {2024})}\BibitemShut {NoStop}%
\bibitem [{\citenamefont {Heinosaari}\ and\ \citenamefont
  {Hillery}(2024)}]{Heinosaari24}%
  \BibitemOpen
  \bibfield  {author} {\bibinfo {author} {\bibfnamefont {T.}~\bibnamefont
  {Heinosaari}}\ and\ \bibinfo {author} {\bibfnamefont {M.}~\bibnamefont
  {Hillery}},\ }\href {https://doi.org/10.1080/00107514.2024.2390279}
  {\bibfield  {journal} {\bibinfo  {journal} {Contemporary Physics}\ }\textbf
  {\bibinfo {volume} {65}},\ \bibinfo {pages} {2} (\bibinfo {year}
  {2024})}\BibitemShut {NoStop}%
\bibitem [{\citenamefont {Yao}(1979)}]{Yao79}%
  \BibitemOpen
  \bibfield  {author} {\bibinfo {author} {\bibfnamefont {A.~C.-C.}\
  \bibnamefont {Yao}},\ }\href@noop {} {\bibfield  {journal} {\bibinfo
  {journal} {{\it Proceedings of the Eleventh Annual ACM Symposium on Theory of
  Computing}}\ ,\ \bibinfo {pages} {pp.~209–213}} (\bibinfo {year} {ACM, New
  York, 1979})}\BibitemShut {NoStop}%
\bibitem [{\citenamefont {Massar}\ \emph {et~al.}(2001)\citenamefont {Massar},
  \citenamefont {Bacon}, \citenamefont {Cerf},\ and\ \citenamefont
  {Cleve}}]{Massar01}%
  \BibitemOpen
  \bibfield  {author} {\bibinfo {author} {\bibfnamefont {S.}~\bibnamefont
  {Massar}}, \bibinfo {author} {\bibfnamefont {D.}~\bibnamefont {Bacon}},
  \bibinfo {author} {\bibfnamefont {N.~J.}\ \bibnamefont {Cerf}},\ and\
  \bibinfo {author} {\bibfnamefont {R.}~\bibnamefont {Cleve}},\ }\href
  {https://doi.org/10.1103/PhysRevA.63.052305} {\bibfield  {journal} {\bibinfo
  {journal} {Phys. Rev. A}\ }\textbf {\bibinfo {volume} {63}},\ \bibinfo
  {pages} {052305} (\bibinfo {year} {2001})}\BibitemShut {NoStop}%
\bibitem [{\citenamefont {Buhrman}\ \emph {et~al.}(2001)\citenamefont
  {Buhrman}, \citenamefont {Cleve}, \citenamefont {Watrous},\ and\
  \citenamefont {de~Wolf}}]{Buhrman01}%
  \BibitemOpen
  \bibfield  {author} {\bibinfo {author} {\bibfnamefont {H.}~\bibnamefont
  {Buhrman}}, \bibinfo {author} {\bibfnamefont {R.}~\bibnamefont {Cleve}},
  \bibinfo {author} {\bibfnamefont {J.}~\bibnamefont {Watrous}},\ and\ \bibinfo
  {author} {\bibfnamefont {R.}~\bibnamefont {de~Wolf}},\ }\href
  {https://doi.org/10.1103/PhysRevLett.87.167902} {\bibfield  {journal}
  {\bibinfo  {journal} {Phys. Rev. Lett.}\ }\textbf {\bibinfo {volume} {87}},\
  \bibinfo {pages} {167902} (\bibinfo {year} {2001})}\BibitemShut {NoStop}%
\bibitem [{\citenamefont {de~Wolf}(2003)}]{Wolf03}%
  \BibitemOpen
  \bibfield  {author} {\bibinfo {author} {\bibfnamefont {R.}~\bibnamefont
  {de~Wolf}},\ }\href {https://doi.org/10.1137/S0097539702407345} {\bibfield
  {journal} {\bibinfo  {journal} {SIAM J. Comput.}\ }\textbf {\bibinfo {volume}
  {32}},\ \bibinfo {pages} {681} (\bibinfo {year} {2003})}\BibitemShut
  {NoStop}%
\bibitem [{\citenamefont {Galv\~ao}\ and\ \citenamefont
  {Hardy}(2003)}]{Galvao03}%
  \BibitemOpen
  \bibfield  {author} {\bibinfo {author} {\bibfnamefont {E.~F.}\ \bibnamefont
  {Galv\~ao}}\ and\ \bibinfo {author} {\bibfnamefont {L.}~\bibnamefont
  {Hardy}},\ }\href {https://doi.org/10.1103/PhysRevLett.90.087902} {\bibfield
  {journal} {\bibinfo  {journal} {Phys. Rev. Lett.}\ }\textbf {\bibinfo
  {volume} {90}},\ \bibinfo {pages} {087902} (\bibinfo {year}
  {2003})}\BibitemShut {NoStop}%
\bibitem [{\citenamefont {Perry}\ \emph {et~al.}(2015)\citenamefont {Perry},
  \citenamefont {Jain},\ and\ \citenamefont {Oppenheim}}]{Perry15}%
  \BibitemOpen
  \bibfield  {author} {\bibinfo {author} {\bibfnamefont {C.}~\bibnamefont
  {Perry}}, \bibinfo {author} {\bibfnamefont {R.}~\bibnamefont {Jain}},\ and\
  \bibinfo {author} {\bibfnamefont {J.}~\bibnamefont {Oppenheim}},\ }\href
  {https://doi.org/10.1103/PhysRevLett.115.030504} {\bibfield  {journal}
  {\bibinfo  {journal} {Phys. Rev. Lett.}\ }\textbf {\bibinfo {volume} {115}},\
  \bibinfo {pages} {030504} (\bibinfo {year} {2015})}\BibitemShut {NoStop}%
\bibitem [{\citenamefont {Halder}\ \emph {et~al.}(2024)\citenamefont {Halder},
  \citenamefont {Streltsov},\ and\ \citenamefont {Banik}}]{Halder24}%
  \BibitemOpen
  \bibfield  {author} {\bibinfo {author} {\bibfnamefont {S.}~\bibnamefont
  {Halder}}, \bibinfo {author} {\bibfnamefont {A.}~\bibnamefont {Streltsov}},\
  and\ \bibinfo {author} {\bibfnamefont {M.}~\bibnamefont {Banik}},\ }\href
  {https://doi.org/10.1103/PhysRevA.109.052608} {\bibfield  {journal} {\bibinfo
   {journal} {Phys. Rev. A}\ }\textbf {\bibinfo {volume} {109}},\ \bibinfo
  {pages} {052608} (\bibinfo {year} {2024})}\BibitemShut {NoStop}%
\bibitem [{\citenamefont {Rout}\ \emph {et~al.}(2023)\citenamefont {Rout},
  \citenamefont {Sakharwade}, \citenamefont {Bhattacharya}, \citenamefont
  {Ramanathan},\ and\ \citenamefont {Horodecki}}]{Rout23}%
  \BibitemOpen
  \bibfield  {author} {\bibinfo {author} {\bibfnamefont {S.}~\bibnamefont
  {Rout}}, \bibinfo {author} {\bibfnamefont {N.}~\bibnamefont {Sakharwade}},
  \bibinfo {author} {\bibfnamefont {S.~S.}\ \bibnamefont {Bhattacharya}},
  \bibinfo {author} {\bibfnamefont {R.}~\bibnamefont {Ramanathan}},\ and\
  \bibinfo {author} {\bibfnamefont {P.}~\bibnamefont {Horodecki}},\ }\href
  {https://arxiv.org/abs/2305.10372} {\bibfield  {journal} {\bibinfo  {journal}
  {arXiv:2305.10372}\ } (\bibinfo {year} {2023})}\BibitemShut {NoStop}%
\bibitem [{\citenamefont {Bruzewicz}\ \emph {et~al.}(2019)\citenamefont
  {Bruzewicz}, \citenamefont {Chiaverini}, \citenamefont {McConnell},\ and\
  \citenamefont {Sage}}]{Bruzewicz19}%
  \BibitemOpen
  \bibfield  {author} {\bibinfo {author} {\bibfnamefont {C.~D.}\ \bibnamefont
  {Bruzewicz}}, \bibinfo {author} {\bibfnamefont {J.}~\bibnamefont
  {Chiaverini}}, \bibinfo {author} {\bibfnamefont {R.}~\bibnamefont
  {McConnell}},\ and\ \bibinfo {author} {\bibfnamefont {J.~M.}\ \bibnamefont
  {Sage}},\ }\href {https://doi.org/10.1063/1.5088164} {\bibfield  {journal}
  {\bibinfo  {journal} {Appl. Phys. Rev.}\ }\textbf {\bibinfo {volume} {6}},\
  \bibinfo {pages} {021314} (\bibinfo {year} {2019})}\BibitemShut {NoStop}%
\bibitem [{\citenamefont {Pelucchi}\ \emph {et~al.}(2022)\citenamefont
  {Pelucchi}, \citenamefont {Fagas}, \citenamefont {Aharonovich}, \citenamefont
  {Englund}, \citenamefont {Figueroa}, \citenamefont {Gong}, \citenamefont
  {Hannes}, \citenamefont {Liu}, \citenamefont {Lu}, \citenamefont {Matsuda},
  \citenamefont {Pan}, \citenamefont {Schreck}, \citenamefont {Sciarrino},
  \citenamefont {Silberhorn}, \citenamefont {Wang},\ and\ \citenamefont
  {Jöns}}]{Pelucchi22}%
  \BibitemOpen
  \bibfield  {author} {\bibinfo {author} {\bibfnamefont {E.}~\bibnamefont
  {Pelucchi}}, \bibinfo {author} {\bibfnamefont {G.}~\bibnamefont {Fagas}},
  \bibinfo {author} {\bibfnamefont {I.}~\bibnamefont {Aharonovich}}, \bibinfo
  {author} {\bibfnamefont {D.}~\bibnamefont {Englund}}, \bibinfo {author}
  {\bibfnamefont {E.}~\bibnamefont {Figueroa}}, \bibinfo {author}
  {\bibfnamefont {Q.}~\bibnamefont {Gong}}, \bibinfo {author} {\bibfnamefont
  {H.}~\bibnamefont {Hannes}}, \bibinfo {author} {\bibfnamefont
  {J.}~\bibnamefont {Liu}}, \bibinfo {author} {\bibfnamefont {C.-Y.}\
  \bibnamefont {Lu}}, \bibinfo {author} {\bibfnamefont {N.}~\bibnamefont
  {Matsuda}}, \bibinfo {author} {\bibfnamefont {J.-W.}\ \bibnamefont {Pan}},
  \bibinfo {author} {\bibfnamefont {F.}~\bibnamefont {Schreck}}, \bibinfo
  {author} {\bibfnamefont {F.}~\bibnamefont {Sciarrino}}, \bibinfo {author}
  {\bibfnamefont {C.}~\bibnamefont {Silberhorn}}, \bibinfo {author}
  {\bibfnamefont {J.}~\bibnamefont {Wang}},\ and\ \bibinfo {author}
  {\bibfnamefont {K.~D.}\ \bibnamefont {Jöns}},\ }\href
  {https://doi.org/10.1038/s42254-021-00398-z} {\bibfield  {journal} {\bibinfo
  {journal} {Nat. Rev. Phys.}\ }\textbf {\bibinfo {volume} {4}},\ \bibinfo
  {pages} {194} (\bibinfo {year} {2022})}\BibitemShut {NoStop}%
\bibitem [{\citenamefont {Cerezo}\ \emph {et~al.}(2022)\citenamefont {Cerezo},
  \citenamefont {Verdon}, \citenamefont {Huang}, \citenamefont {Cincio},\ and\
  \citenamefont {Coles}}]{Cerezo22}%
  \BibitemOpen
  \bibfield  {author} {\bibinfo {author} {\bibfnamefont {M.}~\bibnamefont
  {Cerezo}}, \bibinfo {author} {\bibfnamefont {G.}~\bibnamefont {Verdon}},
  \bibinfo {author} {\bibfnamefont {H.-Y.}\ \bibnamefont {Huang}}, \bibinfo
  {author} {\bibfnamefont {L.}~\bibnamefont {Cincio}},\ and\ \bibinfo {author}
  {\bibfnamefont {P.~J.}\ \bibnamefont {Coles}},\ }\href
  {https://doi.org/10.1038/s43588-022-00311-3} {\bibfield  {journal} {\bibinfo
  {journal} {Nat. Comput. Sci.}\ }\textbf {\bibinfo {volume} {2}},\ \bibinfo
  {pages} {567} (\bibinfo {year} {2022})}\BibitemShut {NoStop}%
\bibitem [{\citenamefont {Streltsov}\ \emph {et~al.}(2017)\citenamefont
  {Streltsov}, \citenamefont {Adesso},\ and\ \citenamefont
  {Plenio}}]{Streltsov17}%
  \BibitemOpen
  \bibfield  {author} {\bibinfo {author} {\bibfnamefont {A.}~\bibnamefont
  {Streltsov}}, \bibinfo {author} {\bibfnamefont {G.}~\bibnamefont {Adesso}},\
  and\ \bibinfo {author} {\bibfnamefont {M.~B.}\ \bibnamefont {Plenio}},\
  }\href {https://doi.org/10.1103/RevModPhys.89.041003} {\bibfield  {journal}
  {\bibinfo  {journal} {Rev. Mod. Phys.}\ }\textbf {\bibinfo {volume} {89}},\
  \bibinfo {pages} {041003} (\bibinfo {year} {2017})}\BibitemShut {NoStop}%
\bibitem [{\citenamefont {Scarani}\ \emph {et~al.}(2009)\citenamefont
  {Scarani}, \citenamefont {Bechmann-Pasquinucci}, \citenamefont {Cerf},
  \citenamefont {Du\ifmmode~\check{s}\else \v{s}\fi{}ek}, \citenamefont
  {L\"utkenhaus},\ and\ \citenamefont {Peev}}]{Scarani09}%
  \BibitemOpen
  \bibfield  {author} {\bibinfo {author} {\bibfnamefont {V.}~\bibnamefont
  {Scarani}}, \bibinfo {author} {\bibfnamefont {H.}~\bibnamefont
  {Bechmann-Pasquinucci}}, \bibinfo {author} {\bibfnamefont {N.~J.}\
  \bibnamefont {Cerf}}, \bibinfo {author} {\bibfnamefont {M.}~\bibnamefont
  {Du\ifmmode~\check{s}\else \v{s}\fi{}ek}}, \bibinfo {author} {\bibfnamefont
  {N.}~\bibnamefont {L\"utkenhaus}},\ and\ \bibinfo {author} {\bibfnamefont
  {M.}~\bibnamefont {Peev}},\ }\href
  {https://doi.org/10.1103/RevModPhys.81.1301} {\bibfield  {journal} {\bibinfo
  {journal} {Rev. Mod. Phys.}\ }\textbf {\bibinfo {volume} {81}},\ \bibinfo
  {pages} {1301} (\bibinfo {year} {2009})}\BibitemShut {NoStop}%
\bibitem [{\citenamefont {Brunner}\ \emph {et~al.}(2014)\citenamefont
  {Brunner}, \citenamefont {Cavalcanti}, \citenamefont {Pironio}, \citenamefont
  {Scarani},\ and\ \citenamefont {Wehner}}]{Brunner14}%
  \BibitemOpen
  \bibfield  {author} {\bibinfo {author} {\bibfnamefont {N.}~\bibnamefont
  {Brunner}}, \bibinfo {author} {\bibfnamefont {D.}~\bibnamefont {Cavalcanti}},
  \bibinfo {author} {\bibfnamefont {S.}~\bibnamefont {Pironio}}, \bibinfo
  {author} {\bibfnamefont {V.}~\bibnamefont {Scarani}},\ and\ \bibinfo {author}
  {\bibfnamefont {S.}~\bibnamefont {Wehner}},\ }\href
  {https://doi.org/10.1103/RevModPhys.86.419} {\bibfield  {journal} {\bibinfo
  {journal} {Rev. Mod. Phys.}\ }\textbf {\bibinfo {volume} {86}},\ \bibinfo
  {pages} {419} (\bibinfo {year} {2014})}\BibitemShut {NoStop}%
\bibitem [{\citenamefont {Herrero-Collantes}\ and\ \citenamefont
  {Garcia-Escartin}(2017)}]{Herrero17}%
  \BibitemOpen
  \bibfield  {author} {\bibinfo {author} {\bibfnamefont {M.}~\bibnamefont
  {Herrero-Collantes}}\ and\ \bibinfo {author} {\bibfnamefont {J.~C.}\
  \bibnamefont {Garcia-Escartin}},\ }\href
  {https://doi.org/10.1103/RevModPhys.89.015004} {\bibfield  {journal}
  {\bibinfo  {journal} {Rev. Mod. Phys.}\ }\textbf {\bibinfo {volume} {89}},\
  \bibinfo {pages} {015004} (\bibinfo {year} {2017})}\BibitemShut {NoStop}%
\bibitem [{\citenamefont {Xu}\ \emph {et~al.}(2020)\citenamefont {Xu},
  \citenamefont {Ma}, \citenamefont {Zhang}, \citenamefont {Lo},\ and\
  \citenamefont {Pan}}]{Xu20}%
  \BibitemOpen
  \bibfield  {author} {\bibinfo {author} {\bibfnamefont {F.}~\bibnamefont
  {Xu}}, \bibinfo {author} {\bibfnamefont {X.}~\bibnamefont {Ma}}, \bibinfo
  {author} {\bibfnamefont {Q.}~\bibnamefont {Zhang}}, \bibinfo {author}
  {\bibfnamefont {H.-K.}\ \bibnamefont {Lo}},\ and\ \bibinfo {author}
  {\bibfnamefont {J.-W.}\ \bibnamefont {Pan}},\ }\href
  {https://doi.org/10.1103/RevModPhys.92.025002} {\bibfield  {journal}
  {\bibinfo  {journal} {Rev. Mod. Phys.}\ }\textbf {\bibinfo {volume} {92}},\
  \bibinfo {pages} {025002} (\bibinfo {year} {2020})}\BibitemShut {NoStop}%
\bibitem [{\citenamefont {{\v{S}}upi{\'{c}}}\ and\ \citenamefont
  {Bowles}(2020)}]{Supic20}%
  \BibitemOpen
  \bibfield  {author} {\bibinfo {author} {\bibfnamefont {I.}~\bibnamefont
  {{\v{S}}upi{\'{c}}}}\ and\ \bibinfo {author} {\bibfnamefont {J.}~\bibnamefont
  {Bowles}},\ }\href {https://doi.org/10.22331/q-2020-09-30-337} {\bibfield
  {journal} {\bibinfo  {journal} {{Quantum}}\ }\textbf {\bibinfo {volume}
  {4}},\ \bibinfo {pages} {337} (\bibinfo {year} {2020})}\BibitemShut {NoStop}%
\bibitem [{\citenamefont {Uola}\ \emph {et~al.}(2020)\citenamefont {Uola},
  \citenamefont {Costa}, \citenamefont {Nguyen},\ and\ \citenamefont
  {G\"uhne}}]{Uola20}%
  \BibitemOpen
  \bibfield  {author} {\bibinfo {author} {\bibfnamefont {R.}~\bibnamefont
  {Uola}}, \bibinfo {author} {\bibfnamefont {A.~C.~S.}\ \bibnamefont {Costa}},
  \bibinfo {author} {\bibfnamefont {H.~C.}\ \bibnamefont {Nguyen}},\ and\
  \bibinfo {author} {\bibfnamefont {O.}~\bibnamefont {G\"uhne}},\ }\href
  {https://doi.org/10.1103/RevModPhys.92.015001} {\bibfield  {journal}
  {\bibinfo  {journal} {Rev. Mod. Phys.}\ }\textbf {\bibinfo {volume} {92}},\
  \bibinfo {pages} {015001} (\bibinfo {year} {2020})}\BibitemShut {NoStop}%
\bibitem [{\citenamefont {Portmann}\ and\ \citenamefont
  {Renner}(2022)}]{Portmann22}%
  \BibitemOpen
  \bibfield  {author} {\bibinfo {author} {\bibfnamefont {C.}~\bibnamefont
  {Portmann}}\ and\ \bibinfo {author} {\bibfnamefont {R.}~\bibnamefont
  {Renner}},\ }\href {https://doi.org/10.1103/RevModPhys.94.025008} {\bibfield
  {journal} {\bibinfo  {journal} {Rev. Mod. Phys.}\ }\textbf {\bibinfo {volume}
  {94}},\ \bibinfo {pages} {025008} (\bibinfo {year} {2022})}\BibitemShut
  {NoStop}%
\bibitem [{\citenamefont {Barnett}\ and\ \citenamefont
  {Croke}(2009)}]{Barnett09}%
  \BibitemOpen
  \bibfield  {author} {\bibinfo {author} {\bibfnamefont {S.~M.}\ \bibnamefont
  {Barnett}}\ and\ \bibinfo {author} {\bibfnamefont {S.}~\bibnamefont
  {Croke}},\ }\href {https://doi.org/10.1364/AOP.1.000238} {\bibfield
  {journal} {\bibinfo  {journal} {Adv. Opt. Photon.}\ }\textbf {\bibinfo
  {volume} {1}},\ \bibinfo {pages} {238} (\bibinfo {year} {2009})}\BibitemShut
  {NoStop}%
\bibitem [{\citenamefont {Helstrom}(1976)}]{Helstrom76}%
  \BibitemOpen
  \bibfield  {author} {\bibinfo {author} {\bibfnamefont {C.~W.}\ \bibnamefont
  {Helstrom}},\ }\href@noop {} {\bibfield  {journal} {\bibinfo  {journal} {{\it
  Quantum Detection and Estimation Theory},}\ } (\bibinfo {year} {Academic,
  1976})}\BibitemShut {NoStop}%
\bibitem [{\citenamefont {Paw\l{}owski}\ and\ \citenamefont
  {Brunner}(2011)}]{Pawlowski11}%
  \BibitemOpen
  \bibfield  {author} {\bibinfo {author} {\bibfnamefont {M.}~\bibnamefont
  {Paw\l{}owski}}\ and\ \bibinfo {author} {\bibfnamefont {N.}~\bibnamefont
  {Brunner}},\ }\href {https://doi.org/10.1103/PhysRevA.84.010302} {\bibfield
  {journal} {\bibinfo  {journal} {Phys. Rev. A}\ }\textbf {\bibinfo {volume}
  {84}},\ \bibinfo {pages} {010302(R)} (\bibinfo {year} {2011})}\BibitemShut
  {NoStop}%
\bibitem [{\citenamefont {Barnett}\ \emph {et~al.}(2002)\citenamefont
  {Barnett}, \citenamefont {Clarke}, \citenamefont {Kendon}, \citenamefont
  {Riis}, \citenamefont {Chefles},\ and\ \citenamefont {Sasaki}}]{Barnett02}%
  \BibitemOpen
  \bibfield  {author} {\bibinfo {author} {\bibfnamefont {S.~M.}\ \bibnamefont
  {Barnett}}, \bibinfo {author} {\bibfnamefont {R.~B.~M.}\ \bibnamefont
  {Clarke}}, \bibinfo {author} {\bibfnamefont {V.~M.}\ \bibnamefont {Kendon}},
  \bibinfo {author} {\bibfnamefont {E.}~\bibnamefont {Riis}}, \bibinfo {author}
  {\bibfnamefont {A.}~\bibnamefont {Chefles}},\ and\ \bibinfo {author}
  {\bibfnamefont {M.}~\bibnamefont {Sasaki}},\ }\href@noop {} {\bibfield
  {journal} {\bibinfo  {journal} {{\it Quantum Communication, Computing, and
  Measurement 3}}\ ,\ \bibinfo {pages} {pp.~59–67}} (\bibinfo {year}
  {Springer US, Boston, MA, 2002})}\BibitemShut {NoStop}%
\bibitem [{\citenamefont {Hall}(2010)}]{Hall10}%
  \BibitemOpen
  \bibfield  {author} {\bibinfo {author} {\bibfnamefont {M.~J.~W.}\
  \bibnamefont {Hall}},\ }\href
  {https://doi.org/10.1103/PhysRevLett.105.250404} {\bibfield  {journal}
  {\bibinfo  {journal} {Phys. Rev. Lett.}\ }\textbf {\bibinfo {volume} {105}},\
  \bibinfo {pages} {250404} (\bibinfo {year} {2010})}\BibitemShut {NoStop}%
\bibitem [{\citenamefont {Barrett}\ and\ \citenamefont
  {Gisin}(2011)}]{Barrett11}%
  \BibitemOpen
  \bibfield  {author} {\bibinfo {author} {\bibfnamefont {J.}~\bibnamefont
  {Barrett}}\ and\ \bibinfo {author} {\bibfnamefont {N.}~\bibnamefont
  {Gisin}},\ }\href {https://doi.org/10.1103/PhysRevLett.106.100406} {\bibfield
   {journal} {\bibinfo  {journal} {Phys. Rev. Lett.}\ }\textbf {\bibinfo
  {volume} {106}},\ \bibinfo {pages} {100406} (\bibinfo {year}
  {2011})}\BibitemShut {NoStop}%
\bibitem [{\citenamefont {Renner}\ \emph {et~al.}(2023)\citenamefont {Renner},
  \citenamefont {Tavakoli},\ and\ \citenamefont {Quintino}}]{Renner23}%
  \BibitemOpen
  \bibfield  {author} {\bibinfo {author} {\bibfnamefont {M.~J.}\ \bibnamefont
  {Renner}}, \bibinfo {author} {\bibfnamefont {A.}~\bibnamefont {Tavakoli}},\
  and\ \bibinfo {author} {\bibfnamefont {M.~T.}\ \bibnamefont {Quintino}},\
  }\href {https://doi.org/10.1103/PhysRevLett.130.120801} {\bibfield  {journal}
  {\bibinfo  {journal} {Phys. Rev. Lett.}\ }\textbf {\bibinfo {volume} {130}},\
  \bibinfo {pages} {120801} (\bibinfo {year} {2023})}\BibitemShut {NoStop}%
\bibitem [{\citenamefont {Paw{\l}owski}\ \emph {et~al.}(2009)\citenamefont
  {Paw{\l}owski}, \citenamefont {Paterek}, \citenamefont {Kaszlikowski},
  \citenamefont {Scarani}, \citenamefont {Winter},\ and\ \citenamefont
  {{\.{Z}}ukowski}}]{Pawowski09}%
  \BibitemOpen
  \bibfield  {author} {\bibinfo {author} {\bibfnamefont {M.}~\bibnamefont
  {Paw{\l}owski}}, \bibinfo {author} {\bibfnamefont {T.}~\bibnamefont
  {Paterek}}, \bibinfo {author} {\bibfnamefont {D.}~\bibnamefont
  {Kaszlikowski}}, \bibinfo {author} {\bibfnamefont {V.}~\bibnamefont
  {Scarani}}, \bibinfo {author} {\bibfnamefont {A.}~\bibnamefont {Winter}},\
  and\ \bibinfo {author} {\bibfnamefont {M.}~\bibnamefont {{\.{Z}}ukowski}},\
  }\href {https://doi.org/10.1038/nature08400} {\bibfield  {journal} {\bibinfo
  {journal} {Nature}\ }\textbf {\bibinfo {volume} {461}},\ \bibinfo {pages}
  {1101} (\bibinfo {year} {2009})}\BibitemShut {NoStop}%
\end{thebibliography}%
\end{document}